\let\brace\undefined
\let\brack\undefined
\DeclarePairedDelimiter{\abs}{\lvert}{\rvert}
\DeclarePairedDelimiter{\floor}{\lfloor}{\rfloor}
\DeclarePairedDelimiter{\brack}{\lbrack}{\rbrack}
\DeclarePairedDelimiter{\brace}{\lbrace}{\rbrace}
\DeclarePairedDelimiter{\paren}{\lparen}{\rparen}
\newcommand{\N}{\mathbb{N}}
\renewcommand{\C}{\mathbb{C}}
\newcommand{\Z}{\mathbb{Z}}
\newcommand{\T}{\mathsf{T}}
\newcommand{\F}{\mathsf{F}}
\newcommand{\FF}{\mathbb{F}}
\newcommand{\gset}{\brace{\pm1,\pm i}}
\newcommand{\Aeven}{{A_{\mathrm{even}}}}
\newcommand{\Aodd}{{A_{\mathrm{odd}}}}
\newcommand{\Beven}{{B_{\mathrm{even}}}}
\newcommand{\Bodd}{{B_{\mathrm{odd}}}}
\newcommand{\Leven}{L_{\text{even}}}
\newcommand{\Lodd}{L_{\text{odd}}}
\newcommand{\LA}{L_{\text{A}}}
\let\Re\undefined
\let\Im\undefined
\DeclareMathOperator{\resum}{resum}
\DeclareMathOperator{\imsum}{imsum}
\DeclareMathOperator{\Re}{Re}
\DeclareMathOperator{\Im}{Im}
\DeclareMathOperator{\DFT}{DFT}
\newcommand{\maxorder}{25}
\begin{document}

\copyrightyear{2018}
\acmYear{2018}
\setcopyright{acmlicensed}
\acmConference[ISSAC '18]{2018 ACM International Symposium on Symbolic and Algebraic Computation}{July 16--19, 2018}{New York, NY, USA}
\acmBooktitle{ISSAC '18: 2018 ACM International Symposium on Symbolic and Algebraic Computation, July 16--19, 2018, New York, NY, USA}
\acmPrice{15.00}
\acmDOI{10.1145/3208976.3209006}
\acmISBN{978-1-4503-5550-6/18/07}


\title{Enumeration of Complex Golay Pairs via Programmatic SAT}

\author{Curtis Bright}
\affiliation{\institution{University of Waterloo}}

\author{Ilias Kotsireas}
\affiliation{\institution{Wilfrid Laurier University}}

\author{Albert Heinle}
\affiliation{\institution{University of Waterloo}}

\author{Vijay Ganesh}
\affiliation{\institution{University of Waterloo}}

\begin{abstract}
We provide a complete enumeration of all complex Golay pairs of length up to~$\maxorder$,
verifying that complex Golay pairs do not exist in lengths $23$ and~$25$
but do exist in length~$24$.
This independently verifies work done by F.~Fiedler in 2013~\cite{fiedler2013small}
that confirms the 2002 conjecture of Craigen, Holzmann, and Kharaghani~\cite{CHK:DM:2002}
that complex Golay pairs of length $23$ don't exist.
Our enumeration method relies on the recently proposed SAT+CAS paradigm of combining
computer algebra systems with SAT solvers to take advantage of the advances
made in the fields of symbolic computation and satisfiability checking.
The enumeration proceeds in two stages:
First, we use a fine-tuned computer program and functionality
from computer algebra systems to construct a list containing all sequences
which could appear as the first sequence in a complex Golay pair (up to equivalence).
Second, we use a programmatic SAT solver to
construct all sequences (if any) that pair off with the sequences constructed in the first stage
to form a complex Golay pair.
\end{abstract}

\keywords{Complex Golay pairs; Boolean satisfiability; SAT solvers; Exhaustive search; Autocorrelation}

\maketitle

\section{Introduction}

The sequences which are now referred to \emph{Golay sequences} or \emph{Golay pairs} were first 
introduced by Marcel Golay in his groundbreaking 1949 paper~\cite{golay1949multi} on multislit spectrometry.
He later formally defined them in a~1961 paper~\cite{golay1961complementary}
where he referred to them as \emph{complementary series}.
Since then, Golay pairs and their generalizations have been widely studied for both their elegant
theoretical properties and a surprising number of practical applications.
For example, they have been applied to radar pulse compression~\cite{Hussain2014}, Wi-Fi networks~\cite{lomayev2017golay},
train wheel detection systems~\cite{1336500}, optical time domain reflectometry~\cite{nazarathy1989real},
and medical ultrasounds~\cite{nowicki2003application}.
Golay pairs consist of two sequences and the property that makes them
special is, roughly speaking, the fact that one sequence's ``correlation''
with itself is the inverse of the other sequence's ``correlation'' with itself;
see Definition~\ref{def:cgp} in Section~\ref{sec:background} for
the formal definition.

Although Golay defined his complementary series over an alphabet of $\brace{\pm1}$, later authors have generalized
the alphabet to include nonreal roots of unity such as 
the fourth root of unity $i=\sqrt{-1}$.
In this paper, we focus on the case where the alphabet
is $\brace{\pm1,\pm i}$. 
In this case the resulting sequence pairs
are sometimes referred to as \emph{4-phase} or \emph{quaternary} Golay pairs though we
will simply refer to them as \emph{complex} Golay pairs.
If a complex Golay pair of length $n$ exists then we say that $n$ is a \emph{complex Golay number}.

Complex Golay pairs have been extensively studied by many authors.
They were originally introduced in 1994 by Craigen
in order to expand the orders of Hadamard matrices attainable via (ordinary) Golay pairs~\cite{C:JCMCC:1994}.
In 1994, Holzmann and Kharaghani enumerated all complex Golay pairs up to length~$13$~\cite{HK:AJC:1994}.
In 2002, Craigen, Holzmann, and Kharaghani enumerated all complex Golay pairs to~$19$, reported
that~$21$ was not a complex Golay number, and conjectured that~$23$ was not a complex Golay number~\cite{CHK:DM:2002}.

In 2006, Fiedler, Jedwab, and Parker provided a construction which explained
the existence of all known complex Golay pairs whose lengths were a power
of~$2$~\cite{fiedler2008multi,fiedler2008framework},
including complex Golay pairs of length~$16$ discovered by Li and Chu~\cite{li2005more} to
not fit into a construction given by Davis and Jedwab~\cite{davis1999peak}.
In 2010, Gibson and Jedwab provided
a construction which explained the existence of all complex Golay pairs
up to length~$26$ and gave a table that listed the total number of complex Golay pairs up to length~$26$~\cite{gibson2011quaternary}.
This table was produced by the mathematician Frank Fiedler,
who described his enumeration method in a 2013 paper~\cite{fiedler2013small}
where he also reported that~$27$ and~$28$ are not complex Golay numbers.

In this paper we give an enumeration method which can be used to
verify the table produced by Fiedler that appears in Gibson and Jedwab's paper;
this table contains counts for the total number of complex Golay pairs and the total
number of sequences which appear as a member of a complex Golay pair.
We implemented our method 
and obtained counts up to length~$25$ after
about a day of computing on a cluster with~25 cores.
The counts we obtain match those in Fiedler's table in each case, increasing the
confidence that the enumeration was performed without error.
In addition, we also provide counts
for the total number of complex Golay pairs up to well-known equivalence operations%
~\cite{HK:AJC:1994} and explicitly make available
the sequences online~\cite{gcpweb}.  To our knowledge, this is the first time 
that explicit complex Golay pairs (and their counts up to equivalence) have been
published for lengths larger than $19$.
Lastly, we publicly release our code for enumerating complex Golay pairs
so that others may verify and reproduce our work; we were not able to find
any other code for enumerating complex Golay pairs which was publicly available.

Our result is of interest not only because of the verification we provide
but also because of the method we use to perform the verification.
The method proceeds in two stages.  In the first stage,
a fine-tuned computer program performs an exhaustive search
among all sequences which could possibly appear as the first
sequence in a complex Golay pair of a given length
(up to an equivalence defined in Section~\ref{sec:background}).
Several filtering theorems which we
describe in Section~\ref{sec:background} allow us to discard almost all
sequences from consideration.  To apply these filtering theorems we use
functionality from the computer algebra system \textsc{Maple}~\cite{Maple10} and the
mathematical library FFTW~\cite{frigo2005design}.  After this filtering is completed we have
a list of sequences of a manageable size such that 
the first sequence of every complex Golay pair of a given length
(up to equivalence) appears in the list.

In the second stage, we use the programmatic SAT solver~\textsc{MapleSAT}~\cite{liang2017empirical}
to determine which sequences from the first stage (if any) can be
paired up with another sequence to form a complex Golay pair.  A SAT
instance is constructed from each sequence found in the first stage
such that the SAT instance is satisfiable if and only if the
sequence is part of a complex Golay pair.  Furthermore, in the
case that the instance is satisfiable
a satisfying assignment determines a sequence which forms
the second half of a complex Golay pair.

This method combines both computer algebra 
and SAT solving and 
is of interest in its own right because it links the two previously
separated fields of symbolic computation and satisfiability checking.
Recently there has been interest in combining methods from both fields
to solve computational problems as demonstrated by the
SC$^2$ project~\cite{sc2,abraham2015building}.  Our work fits into
this paradigm and to our knowledge is the first application of a SAT solver
to search for complex Golay pairs,
though previous work exists which uses
a SAT solver to search for other types of
complementary sequences~\cite{bright2016mathcheck,brightthesis,DBLP:journals/jar/ZulkoskiBHKCG17,bright2017sat+}.

\section{Background on Complex Golay Pairs}\label{sec:background}

In this section we present the background necessary to describe our
method for enumerating complex Golay pairs.
First, we require some preliminary definitions
to define what complex Golay pairs are.
Let $\overline{x}$ denote the complex conjugate of $x$
(this is just the multiplicative inverse of $x$
when $x$ is $\pm1$ or $\pm i$).

\begin{definition}[cf.~\cite{kotsireas2013algorithms}]
The \emph{nonperiodic autocorrelation function}
of a sequence $A = [a_0, \dotsc, a_{n-1}] \in \C^n$ of length $n\in \N$ is 
\[ N_A(s) \coloneqq \sum_{k=0}^{n-s-1} a_k\overline{a_{k+s}}, \qquad s=0,\dotsc, n-1 . \]
\end{definition}

\begin{definition}\label{def:cgp}
  A pair of sequences $(A,B)$ with $A$ and $B$ in $\gset^n$
  are called a \emph{complex Golay pair}
  if the sum of their nonperiodic autocorrelations is a constant zero for $s\neq0$, i.e.,
  \begin{equation*}
  N_A(s) + N_B(s) = 0 \qquad\text{for}\qquad s = 1, \dotsc, n-1 . 
  \end{equation*}
\end{definition}

Note that if $A$ and $B$ are in $\gset^n$
then $N_A(0)+N_B(0)=2n$ by the definition of
the complex nonperiodic autocorrelation function
and the fact that
$x\overline{x}
=1$ if $x$ is $\pm1$ or $\pm i$, explaining why $s\neq0$ in Definition~\ref{def:cgp}.
\begin{example}
$([1,1,-1], [1, i, 1])$ is a complex Golay pair.
\end{example}

\subsection{Equivalence operations}\label{subsec:equiv}

There are certain invertible operations which preserve the property of
being a complex Golay pair when applied to a sequence pair $(A,B)$.
These are summarized in the following proposition.

\begin{proposition}[cf.~\cite{CHK:DM:2002}]
\label{prop:equivGolay}
Let\/ $([a_0,\ldots,a_{n-1}]$, $[b_0,\ldots,b_{n-1}])$ be a complex
Golay pair. The following are then also complex Golay pairs:
\begin{enumerate}
\item[E1.] (Reversal)\/ $([a_{n-1},\dotsc,a_0], [b_{n-1},\dotsc,b_0])$.
\item[E2.] (Conjugate Reverse $A$)\/ $([\overline{a_{n-1}},\ldots,\overline{a_0}], [b_0,\ldots,b_{n-1}])$.
\item[E3.] (Swap)\/ $([b_0,\dotsc,b_{n-1}], [a_0,\dotsc,a_{n-1}])$. 
\item[E4.] (Scale $A$)\/ $([ia_0,\dotsc,ia_{n-1}], [b_0,\dotsc,b_{n-1}])$. 
\item[E5.] (Positional Scaling)\/ $(i\star A, i\star B)$
where\/ $c\star(x_0,\dotsc,x_{n-1})\coloneqq(x_0,cx_1,c^2x_2,\dotsc,c^{n-1}x_{n-1})$.
\end{enumerate}
\end{proposition}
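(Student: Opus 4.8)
The plan is to verify each of the five operations E1--E5 separately by showing that the transformed pair satisfies the defining identity $N_{A'}(s) + N_{B'}(s) = 0$ for $s = 1, \dotsc, n-1$, given that $N_A(s) + N_B(s) = 0$ holds. In every case the strategy is the same: express the autocorrelation $N_{A'}(s)$ of the transformed sequence in terms of $N_A(s)$ (or $\overline{N_A(s)}$), and likewise for $B'$, then add. First I would record the elementary fact that $x\overline{x} = 1$ for $x \in \gset$, and more generally that conjugation is multiplicative and $\overline{\overline{x}} = x$; these are the only algebraic inputs needed.

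\begin{enumerate}
\item[E1.] For the reversal $A' = [a_{n-1}, \dotsc, a_0]$, writing $a'_k = a_{n-1-k}$, the sum $N_{A'}(s) = \sum_{k=0}^{n-s-1} a_{n-1-k}\overline{a_{n-1-k-s}}$; reindexing by $j = n-1-k-s$ turns this into $\sum_{j} a_{j+s}\overline{a_j} = \overline{N_A(s)}$. Hence $N_{A'}(s) + N_{B'}(s) = \overline{N_A(s) + N_B(s)} = 0$.
\item[E2.] For the conjugate reverse $A' = [\overline{a_{n-1}}, \dotsc, \overline{a_0}]$, combining the reindexing of E1 with the extra conjugation gives $N_{A'}(s) = \overline{\overline{N_A(s)}} = N_A(s)$, so the identity is preserved with $B$ untouched.
\item[E3.] Swapping $A$ and $B$ leaves $N_A(s) + N_B(s)$ unchanged as a sum, so this is immediate.
\item[E4.] Scaling $A$ by $i$ replaces $a_k\overline{a_{k+s}}$ by $(ia_k)\overline{(ia_{k+s})} = i\overline{i}\,a_k\overline{a_{k+s}} = a_k\overline{a_{k+s}}$, since $i\overline{i} = 1$; thus $N_{A'}(s) = N_A(s)$ exactly.
\item[E5.] For positional scaling $A' = i\star A$ we have $a'_k = i^k a_k$, so $a'_k\overline{a'_{k+s}} = i^k\overline{i^{k+s}}\,a_k\overline{a_{k+s}} = i^k i^{-(k+s)} a_k\overline{a_{k+s}} = i^{-s} a_k\overline{a_{k+s}}$. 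Therefore $N_{A'}(s) = i^{-s} N_A(s)$ and likewise $N_{B'}(s) = i^{-s} N_B(s)$, so $N_{A'}(s) + N_{B'}(s) = i^{-s}\bigl(N_A(s) + N_B(s)\bigr) = 0$.
\end{enumerate}

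Each case also requires checking that the transformed sequences still lie in $\gset^n$, but this is clear since $\gset$ is closed under conjugation and under multiplication by $i$. The only mildly delicate point is the reindexing in E1 and E2: one must confirm that as $k$ runs over $0, \dotsc, n-s-1$ the new index $j = n-1-s-k$ also runs over exactly $0, \dotsc, n-s-1$, which it does since it is an affine bijection of that range onto itself. I do not expect any genuine obstacle; the main thing to be careful about is tracking conjugations and powers of $i$ correctly, particularly the sign of the exponent in E5.
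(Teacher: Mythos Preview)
Your proof is correct. Each of the five computations is accurate: the reindexing in E1 and E2 is handled properly, and the conjugation/power-of-$i$ bookkeeping in E4 and E5 is right (in particular $\overline{i^{k+s}}=i^{-(k+s)}$ since $\abs{i}=1$, giving the factor $i^{-s}$ you obtain).

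Note, however, that the paper does not actually supply its own proof of this proposition; it is stated with a citation to Craigen, Holzmann, and Kharaghani and then used without further justification. So there is nothing in the paper to compare your argument against beyond observing that you have filled in the routine verification the authors omitted. Your direct computation of $N_{A'}(s)$ in terms of $N_A(s)$ is the standard way to check these equivalences and is entirely appropriate here.
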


\begin{definition}
We call two complex Golay pairs $(A, B)$ and $(A', B')$ \emph{equivalent}
if $(A', B')$ can be obtained from
$(A, B)$ using the transformations described in Proposition~\ref{prop:equivGolay}.
\end{definition}

\subsection{Useful properties and lemmas}

In this subsection we prove some useful properties that complex
Golay pairs satisfy and which will be exploited by our method
for enumerating complex Golay pairs.
The first lemma provides a fundamental
relationship that all complex Golay pairs must
satisfy.  To conveniently state it we use the following definition.

\begin{definition}[cf.~\cite{CHK:DM:2002}]
The \emph{Hall polynomial} of the sequence $A\coloneqq[a_0,\dotsc,a_{n-1}]$
is defined to be $h_A(z)\coloneqq a_0 + a_1 z + \dotsb + a_{n-1}z^{n-1}\in\C[z]$.
\end{definition}

\begin{lemma}[cf.~\cite{paterson2000generalized}]\label{lem:hall}
  Let\/ $(A, B)$ be a complex Golay pair.
  For every\/ $z \in \C$ with\/ $\abs{z} = 1$,
  we have
  \begin{equation*}\abs{h_A(z)}^2+\abs{h_B(z)}^2 = 2n.\end{equation*}
\end{lemma}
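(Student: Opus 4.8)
The plan is to compute $\abs{h_A(z)}^2 = h_A(z)\,\overline{h_A(z)}$ directly and recognize the autocorrelation values $N_A(s)$ among its coefficients, then add the analogous identity for $B$ and invoke the defining property of a complex Golay pair.

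First I would use the hypothesis $\abs{z}=1$, which gives $\overline{z} = z^{-1}$ and hence $\overline{h_A(z)} = \sum_{j=0}^{n-1}\overline{a_j}\,z^{-j}$. Multiplying out,
\begin{equation*}
\abs{h_A(z)}^2 = \sum_{k=0}^{n-1}\sum_{j=0}^{n-1} a_k\overline{a_j}\,z^{k-j},
\end{equation*}
and I would collect terms according to the value $s\coloneqq k-j$. The $s=0$ contribution is $\sum_{k} a_k\overline{a_k} = N_A(0) = n$. For each $s$ with $1\le s\le n-1$, the terms with $k-j=s$ sum to $\sum_{j=0}^{n-1-s} a_{j+s}\overline{a_j}$, which equals $\overline{N_A(s)}$ after conjugating the definition of $N_A(s)$; symmetrically the terms with $k-j=-s$ sum to $N_A(s)$. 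This yields the closed form
\begin{equation*}
\abs{h_A(z)}^2 = n + \sum_{s=1}^{n-1}\bigl(N_A(s)\,z^{-s} + \overline{N_A(s)}\,z^{s}\bigr),
\end{equation*}
and the identical computation gives the corresponding expression for $\abs{h_B(z)}^2$.

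Adding the two expressions, the constant terms contribute $2n$ and the remaining terms group as $\sum_{s=1}^{n-1}\bigl((N_A(s)+N_B(s))\,z^{-s} + \overline{(N_A(s)+N_B(s))}\,z^{s}\bigr)$, which vanishes term by term because $N_A(s)+N_B(s)=0$ for $s=1,\dotsc,n-1$ by Definition~\ref{def:cgp}. Hence $\abs{h_A(z)}^2+\abs{h_B(z)}^2 = 2n$, as claimed.

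The only real obstacle is bookkeeping: keeping the summation ranges correct when reindexing the double sum by $s=k-j$, and checking carefully that the $s$-th and $(-s)$-th coefficient blocks are $\overline{N_A(s)}$ and $N_A(s)$ (rather than the other way around), so that their sum is manifestly real and matches the autocorrelation convention fixed in the Definition of $N_A$. Once the indices are pinned down, the rest is a one-line application of the Golay condition.
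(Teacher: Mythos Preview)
Your proof is correct and follows essentially the same approach as the paper: expand $\abs{h_A(z)}^2$ using $\overline{z}=z^{-1}$, identify the autocorrelations $N_A(s)$ among the coefficients, and cancel via the Golay condition. The only cosmetic difference is that the paper substitutes $z=e^{i\theta}$ and writes the cross terms as $2\bigl(\Re(N_A(s))\cos(s\theta)+\Im(N_A(s))\sin(s\theta)\bigr)$, whereas you keep them in the equivalent complex form $N_A(s)z^{-s}+\overline{N_A(s)}\,z^{s}$; your version is arguably a bit cleaner since it avoids the trigonometric detour.
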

\begin{proof}
Since $\abs{z}=1$ we can write $z=e^{i\theta}$ for some $0\leq\theta<2\pi$.
Similar to the fact pointed out in~\cite{kharaghani2005hadamard},
using Euler's identity one can derive the following expansion:
\[ \abs{h_A(z)}^2 = N_A(0) + 2\sum_{j=1}^{n-1}\paren[\big]{\Re(N_A(j))\cos(\theta j) + \Im(N_A(j))\sin(\theta j)} . \]
Since $A$ and $B$ form a complex Golay pair, by definition one has that
$\Re(N_A(j)+N_B(j))=0$ and $\Im(N_A(j)+N_B(j))=0$ and then
\[ \abs{h_A(z)}^2 + \abs{h_B(z)}^2 = N_A(0) + N_B(0) = 
2n . \qedhere \]
\end{proof}

This lemma is highly useful as a condition for filtering sequences which
could not possibly be part of a complex Golay pair, as explained
in the following corollary.

\begin{corollary}\label{cor:filter}
Let\/ $A\in\C^n$, $z\in\C$ with\/ $\abs{z}=1$, and\/ $\abs{h_A(z)}^2>2n$.
Then\/ $A$ is not a member of a complex Golay pair.
\end{corollary}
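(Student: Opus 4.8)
The plan is to argue by contradiction, letting Lemma~\ref{lem:hall} do all the work. Suppose, for contradiction, that $A$ is a member of some complex Golay pair. By definition a ``member'' means $A$ occupies one of the two coordinates of the pair, so either $(A,B)$ is a complex Golay pair for some $B\in\gset^n$ or $(B,A)$ is; in the second case the swap operation E3 of Proposition~\ref{prop:equivGolay} produces the equivalent complex Golay pair $(A,B)$. Hence we may assume without loss of generality that $(A,B)$ is a complex Golay pair.

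Next I would invoke Lemma~\ref{lem:hall} at the prescribed point $z$, which is legitimate since the hypothesis guarantees $\abs{z}=1$. This yields $\abs{h_A(z)}^2+\abs{h_B(z)}^2=2n$. Because $\abs{h_B(z)}^2$ is the squared modulus of a complex number it is nonnegative, so rearranging gives $\abs{h_A(z)}^2\le 2n$, which directly contradicts the hypothesis $\abs{h_A(z)}^2>2n$. Therefore no such pair exists, i.e., $A$ is not a member of a complex Golay pair.

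There is no genuinely hard step here: all of the analytic content lives in Lemma~\ref{lem:hall}, and the corollary merely records the elementary observation that a single nonnegative term in a sum of fixed value $2n$ cannot itself exceed $2n$. The only point that calls for a moment's care is the interpretation of ``member of a complex Golay pair''---in particular the possibility that $A$ is the \emph{second} sequence of the pair rather than the first---and this is dispatched immediately by the swap equivalence E3.
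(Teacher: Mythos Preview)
Your proof is correct and follows essentially the same route as the paper's: assume $A$ belongs to a complex Golay pair, apply Lemma~\ref{lem:hall}, and use $\abs{h_B(z)}^2\ge 0$ to obtain a contradiction. Your discussion of the swap operation E3 is slightly more careful than necessary (the autocorrelation condition defining a complex Golay pair is already symmetric in $A$ and $B$, so Lemma~\ref{lem:hall} applies regardless of which slot $A$ occupies), but it does no harm.
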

\begin{proof}
Suppose the sequence $A$ was a member of a complex Golay pair whose other
member was the sequence $B$.
Since $\abs{h_B(z)}^2\geq0$, we must have $\abs{h_A(z)}^2+\abs{h_B(z)}^2>2n$,
in contradiction to Lemma~\ref{lem:hall}.
\end{proof}

In~\cite{fiedler2013small}, Fiedler derives the following extension of Lemma~\ref{lem:hall}.
Let $\Aeven$ be identical to $A$ with the entries of odd index replaced by zeros and let $\Aodd$ be
identical to $A$ with the entries of even index replaced by zeros.

\begin{lemma}[cf.~\cite{fiedler2013small}]\label{lem:fiedler}
  Let\/ $(A, B)$ be a complex Golay pair.
  For every\/ $z \in \C$ with\/ $\abs{z} = 1$,
  we have
  \begin{equation*}
  \abs{h_\Aeven(z)}^2+\abs{h_\Aodd(z)}^2+\abs{h_\Beven(z)}^2+\abs{h_\Bodd(z)}^2 = 2n . \qedhere
  \end{equation*}
\end{lemma}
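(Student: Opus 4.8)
The plan is to deduce Lemma~\ref{lem:fiedler} from Lemma~\ref{lem:hall} by using the substitution $z\mapsto -z$ to isolate the even and odd parts of a Hall polynomial. For any sequence $A$ we have $h_A = h_\Aeven + h_\Aodd$, and because $h_\Aeven$ only involves monomials of even degree while $h_\Aodd$ only involves monomials of odd degree (this is exactly what the zero-padding in the definitions of $\Aeven$ and $\Aodd$ buys us), one gets $h_\Aeven(-z)=h_\Aeven(z)$ and $h_\Aodd(-z)=-h_\Aodd(z)$. Hence $2h_\Aeven(z)=h_A(z)+h_A(-z)$ and $2h_\Aodd(z)=h_A(z)-h_A(-z)$, and the same relations hold with $B$ in place of $A$.

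Next I would apply the parallelogram law $\abs{u+v}^2+\abs{u-v}^2 = 2\abs{u}^2+2\abs{v}^2$ with $u=h_A(z)$ and $v=h_A(-z)$, which together with the previous identities gives
\[ \abs{h_\Aeven(z)}^2 + \abs{h_\Aodd(z)}^2 = \tfrac12\paren[\big]{\abs{h_A(z)}^2 + \abs{h_A(-z)}^2} , \]
and likewise $\abs{h_\Beven(z)}^2 + \abs{h_\Bodd(z)}^2 = \tfrac12\paren[\big]{\abs{h_B(z)}^2 + \abs{h_B(-z)}^2}$. Adding these two equations and rearranging the right-hand side as $\tfrac12\paren[\big]{\abs{h_A(z)}^2+\abs{h_B(z)}^2} + \tfrac12\paren[\big]{\abs{h_A(-z)}^2+\abs{h_B(-z)}^2}$, I would then invoke Lemma~\ref{lem:hall} twice --- once at $z$ and once at $-z$, which is legitimate since $\abs{-z}=\abs{z}=1$ --- so that each parenthesized quantity equals $2n$. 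This yields the claimed total $\tfrac12(2n)+\tfrac12(2n)=2n$.

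I do not expect a serious obstacle here: the argument is essentially a one-line reduction once the even/odd splitting trick is spotted. The only points requiring a little care are verifying the sign behaviour $h_\Aodd(-z)=-h_\Aodd(z)$ (and the analogous even identity) from the definitions, and remembering that $-z$ still lies on the unit circle so that Lemma~\ref{lem:hall} applies there. An alternative, more pedestrian route would be to mimic the proof of Lemma~\ref{lem:hall} directly, expanding each $\abs{h_{\bullet}(z)}^2$ in terms of the autocorrelations of $\Aeven,\Aodd,\Beven,\Bodd$ and checking the cancellations by hand; I would avoid this in favour of the parallelogram-law argument, which reuses Lemma~\ref{lem:hall} as a black box.
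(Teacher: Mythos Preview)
Your argument is correct. The parallelogram identity together with the parity behaviour $h_\Aeven(-z)=h_\Aeven(z)$, $h_\Aodd(-z)=-h_\Aodd(z)$ indeed gives
\[
\abs{h_\Aeven(z)}^2+\abs{h_\Aodd(z)}^2=\tfrac12\paren[\big]{\abs{h_A(z)}^2+\abs{h_A(-z)}^2},
\]
and two applications of Lemma~\ref{lem:hall} finish the job.

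This is, however, a genuinely different route from the paper's. The paper takes precisely the ``pedestrian'' alternative you mention: it reopens the proof of Lemma~\ref{lem:hall} and expands $\abs{h_\Aeven(z)}^2+\abs{h_\Aodd(z)}^2$ directly, obtaining
\[
N_A(0)+2\sum_{\substack{j=1\\ j\ \text{even}}}^{n-1}\paren[\big]{\Re(N_A(j))\cos(\theta j)+\Im(N_A(j))\sin(\theta j)},
\]
so that only the even-shift autocorrelations survive; adding the analogous expression for $B$ and using $N_A(j)+N_B(j)=0$ then yields $2n$. Your approach is cleaner and treats Lemma~\ref{lem:hall} as a black box, which is aesthetically nicer. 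The paper's expansion, on the other hand, makes explicit \emph{why} the even/odd split works at the level of autocorrelations (the cross terms between $\Aeven$ and $\Aodd$ contribute exactly the odd-shift part of $N_A$), which is perhaps more informative for someone wanting to understand the filtering criterion of Corollary~\ref{cor:fiedler}. Either proof is entirely adequate here.
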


\begin{proof}
The proof proceeds as in the proof of Lemma~\ref{lem:hall}, except
that one instead obtains that $\abs{h_{\Aeven}(z)}^2 + \abs{h_{\Aodd}(z)}^2$ is equal to
\[ N_A(0) + 2\sum_{\substack{j=1\\\text{$j$ even}}}^{n-1}\paren[\big]{\Re(N_A(j))\cos(\theta j) + \Im(N_A(j))\sin(\theta j)} . \qedhere \]
\end{proof}

\begin{corollary}\label{cor:fiedler}
Let\/ $A\in\C^n$, $z\in\C$ with\/ $\abs{z}=1$, and\/ $\abs{h_{A'}(z)}^2>2n$
where\/ $A'$ is either\/ $\Aeven$ or\/ $\Aodd$.
Then\/ $A$ is not a member of a complex Golay pair.
\end{corollary}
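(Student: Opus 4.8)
The plan is to argue exactly as in the proof of Corollary~\ref{cor:filter}, with Lemma~\ref{lem:fiedler} playing the role of Lemma~\ref{lem:hall}. Suppose, for contradiction, that $A$ is a member of a complex Golay pair. If $A$ happens to be the second member of that pair, first apply the Swap equivalence E3 of Proposition~\ref{prop:equivGolay} so that we may assume $(A, B)$ is a complex Golay pair for some $B \in \gset^n$. (Alternatively, one may simply observe that the left-hand side of the identity in Lemma~\ref{lem:fiedler} is symmetric under interchanging the two members of the pair, so this reduction is not strictly necessary.)

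Next, invoke Lemma~\ref{lem:fiedler} at the given $z$ with $\abs{z} = 1$ to obtain
\[ \abs{h_\Aeven(z)}^2 + \abs{h_\Aodd(z)}^2 + \abs{h_\Beven(z)}^2 + \abs{h_\Bodd(z)}^2 = 2n . \]
Each of the four terms on the left is a squared modulus, hence nonnegative, so in particular $\abs{h_\Aeven(z)}^2 \le 2n$ and $\abs{h_\Aodd(z)}^2 \le 2n$. Taking $A'$ to be whichever of $\Aeven$, $\Aodd$ is named in the hypothesis then gives $\abs{h_{A'}(z)}^2 \le 2n$, contradicting the assumption $\abs{h_{A'}(z)}^2 > 2n$. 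Hence no such $B$ exists, and $A$ is not a member of a complex Golay pair.

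There is essentially no obstacle here: the statement is a direct consequence of Lemma~\ref{lem:fiedler} in exactly the way Corollary~\ref{cor:filter} follows from Lemma~\ref{lem:hall}. The only point worth a word of care is that ``member of a complex Golay pair'' permits $A$ to occur in either coordinate, which is handled either by the symmetry of the identity in Lemma~\ref{lem:fiedler} or, more explicitly, by the Swap equivalence E3.
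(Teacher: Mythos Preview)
Your proof is correct and follows essentially the same approach as the paper: both argue that the hypothesis $\abs{h_{A'}(z)}^2>2n$ violates the identity of Lemma~\ref{lem:fiedler} since all four summands are nonnegative. Your version is simply more explicit about the contradiction and about why $A$ may be assumed to sit in the first coordinate, but the underlying argument is identical.
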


\begin{proof}
If either $\abs{h_\Aeven(z)}^2>2n$
or $\abs{h_\Aodd(z)}^2>2n$ then the identity in Lemma~\ref{lem:fiedler} cannot hold.
\end{proof}

The next lemma is useful because it allows us to write $2n$
as the sum of four integer squares.  It is stated
in~\cite{HK:AJC:1994} using a different notation; we use the notation
$\resum(A)$ and $\imsum(A)$ to represent the real and imaginary parts of
the sum of the entries of $A$.
For example, if $A\coloneqq[1,i,-i,i]$ then $\resum(A)=\imsum(A)=1$. 

\begin{lemma}[cf.~\cite{HK:AJC:1994}]\label{lem:decomp}
  Let\/ $(A, B)$ be a complex Golay sequence
  pair. Then
  \[ {\resum(A)}^2 + {\imsum(A)}^2 + {\resum(B)}^2 + {\imsum(B)}^2 = 2n . \]
\end{lemma}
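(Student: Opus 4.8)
The plan is to recognize that $\resum(A)^2+\imsum(A)^2$ is nothing but $\abs{h_A(1)}^2$, and then invoke Lemma~\ref{lem:hall} at the specific point $z=1$. Concretely, first I would observe that evaluating the Hall polynomial at $z=1$ gives
\[ h_A(1) = a_0 + a_1 + \dotsb + a_{n-1} = \resum(A) + i\,\imsum(A), \]
by the definition of $\resum$ and $\imsum$ as the real and imaginary parts of the sum of the entries of $A$. Taking squared modulus yields $\abs{h_A(1)}^2 = \resum(A)^2 + \imsum(A)^2$, and identically $\abs{h_B(1)}^2 = \resum(B)^2 + \imsum(B)^2$.

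Next I would note that $z=1$ satisfies $\abs{z}=1$, so Lemma~\ref{lem:hall} applies and gives $\abs{h_A(1)}^2 + \abs{h_B(1)}^2 = 2n$. Substituting the two identities from the previous step immediately produces
\[ \resum(A)^2 + \imsum(A)^2 + \resum(B)^2 + \imsum(B)^2 = 2n, \]
which is the claim. (If one preferred a self-contained argument avoiding Lemma~\ref{lem:hall}, the same identity also follows directly from summing $N_A(s)+N_B(s)$ over all $s$, including $s=0$, and using $\abs{\sum_k a_k}^2 = \sum_{s=-(n-1)}^{n-1} N_A(\abs{s})$-type bookkeeping; but routing through Lemma~\ref{lem:hall} is shorter.)

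There is essentially no obstacle here: the only point requiring any care is the bookkeeping that the sum of the entries of a sequence over $\gset^n$ decomposes as $\resum + i\,\imsum$ and that $\abs{u+iv}^2 = u^2+v^2$ for real $u,v$, together with the (trivial) verification that $z=1$ is an admissible choice in Lemma~\ref{lem:hall}. Everything else is a direct substitution.
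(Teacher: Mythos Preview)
Your proposal is correct and matches the paper's proof essentially line for line: the paper also sets $z=1$ in Lemma~\ref{lem:hall}, rewrites $h_A(1)=\resum(A)+i\,\imsum(A)$, and uses $\abs{u+iv}^2=u^2+v^2$ to conclude.
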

\begin{proof}
Using Lemma~\ref{lem:hall} with $z=1$ we have
\[ \abs{\resum(A)+\imsum(A)i}^2 + \abs{\resum(B)+\imsum(B)i}^2 = 2n . \]
Since $\abs{\resum(X)+\imsum(X)i}^2 = \resum(X)^2 + \imsum(X)^2$ the result follows.
\end{proof}

The next lemma provides some normalization conditions which can be used when searching
for complex Golay pairs up to equivalence.
Since all complex Golay pairs $(A',B')$ which are equivalent to
a complex Golay pair $(A,B)$ can easily
be generated from $(A,B)$, it suffices to search for complex Golay pairs up to equivalence.

\begin{lemma}[cf.~\cite{fiedler2013small}]\label{lem:normalize}
  Let\/ $(A', B')$ be a complex Golay pair.
  Then\/ $(A', B')$ is equivalent to a complex Golay pair\/
  $(A, B)$ with\/ $a_0=a_1=b_0=1$ and\/ $a_2\in\brace{\pm1, i}$.
\end{lemma}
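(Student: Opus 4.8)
The plan is to drive the pair into the claimed normal form by adjusting the entries $a_0$, $b_0$, $a_1$, and $a_2$ one after another, each time invoking an equivalence operation (or a short composition of the operations E1--E5 of Proposition~\ref{prop:equivGolay}) that leaves the already-normalized entries intact; note that every entry remains in $\gset$ throughout, since $\gset$ is closed under multiplication by $i$. First I would apply E4 (scale $A$ by $i$): each application multiplies $a_0$ by $i$, so after applying it between zero and three times we may assume $a_0 = 1$. Next, to normalize $b_0$ without disturbing $a_0$, I would use the composition E3, E4, E3 --- swap, scale the new first sequence by $i$, swap back --- whose net effect is to multiply every entry of $B$ by $i$ while fixing $A$; applying it an appropriate number of times gives $b_0 = 1$ with $a_0 = 1$ preserved.

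Then I would apply E5 (positional scaling by $i$), which sends $[x_0, x_1, x_2, \dots, x_{n-1}]$ to $[x_0, i x_1, i^2 x_2, \dots, i^{n-1} x_{n-1}]$ and hence fixes the zeroth entries $a_0$ and $b_0$ while multiplying $a_1$ by $i$. Applying E5 between zero and three times therefore yields $a_1 = 1$ while keeping $a_0 = b_0 = 1$. At this point $a_0 = a_1 = b_0 = 1$ and $a_2 \in \gset$, and if $a_2 \in \brace{\pm1, i}$ we are done. The one remaining case --- and the main obstacle, since the operations acting on $A$ without permuting its positions, namely E4 and E5, necessarily also move $a_0$ or $a_1$ --- is $a_2 = -i$.

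To handle that case I would conjugate $A$ entrywise, which can be realized as the composition E1 followed by E2: reversing $A$ and then conjugate-reversing the result returns $[\overline{a_0}, \overline{a_1}, \dots, \overline{a_{n-1}}]$, so the combined operation conjugates $A$ position by position, at the cost of reversing $B$ (which E1 also affects). Entrywise conjugation fixes $a_0 = \overline{1} = 1$ and $a_1 = \overline{1} = 1$ and sends $a_2 = -i$ to $\overline{-i} = i \in \brace{\pm1, i}$; its only side effect is that $b_0$ may no longer equal $1$ because $B$ got reversed. I would then re-apply the ``multiply $B$ by $i$'' composition (E3, E4, E3) to restore $b_0 = 1$ without touching $A$. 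Since each of E1--E5 preserves the property of being a complex Golay pair and these operations are invertible, the pair $(A, B)$ obtained by this sequence of moves is a complex Golay pair equivalent to $(A', B')$ satisfying $a_0 = a_1 = b_0 = 1$ and $a_2 \in \brace{\pm1, i}$. (For $n \le 2$ the condition on $a_2$ is vacuous.)
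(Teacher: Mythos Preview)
Your proof is correct and follows essentially the same approach as the paper: use E4 to fix $a_0$, E5 to fix $a_1$, the composition E1 then E2 (entrywise conjugation of $A$) to handle the case $a_2=-i$, and the composition E3--E4--E3 to fix $b_0$. The only difference is the order: you normalize $b_0$ before treating $a_2$, which forces you to re-fix $b_0$ after the E1/E2 step reverses $B$; the paper simply defers the $b_0$ normalization to the very end, so it never has to be redone.
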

\begin{proof}
We will transform a given complex Golay sequence pair $(A', B')$ into 
an equivalent normalized one
using the equivalence operations of
Proposition~\ref{prop:equivGolay}.
To start with, let $A\coloneqq A'$ and $B\coloneqq B'$.


First, we ensure that $a_0=1$.  To do this, we apply operation E4 (scale $A$) enough times
until $a_0=1$.

Second, we ensure that $a_1=1$.  To do this, we apply operation E5 (positional scaling) enough times
until $a_1=1$; note that E5 does not change $a_0$.

Third, we ensure that $a_2\neq -i$.  If it is, we apply operation E1 (reversal) and E2 (conjugate reverse $A$)
which has the effect of keeping $a_0=a_1=1$ and setting $a_2=i$.

Last, we ensure that $b_0=1$.  To do this, we apply operation E3 (swap) and then operation E4 (scale $A$) enough times
so that $a_0=1$ and then operation E3 (swap) again.  This has the effect of not changing $A$ but setting $b_0=1$.
\end{proof}

\subsection{Sum-of-squares decomposition types}\label{subseq:decomp}

A consequence of Lemma~\ref{lem:decomp} is
that every complex Golay pair generates a decomposition of $2n$ into a sum of four integer squares.
In fact, it typically generates several decompositions of $2n$ into a sum of four squares.
Recall that $i\star A$ denotes positional scaling by~$i$ (operation E5) on the sequence~$A$.
If $(A,B)$ is a complex Golay pair then applying operation E5 to this pair $k$ times 
shows that $(i^k\star A,i^k\star B)$ 
is also a complex Golay pair.  
By using Lemma~\ref{lem:decomp} on these complex Golay pairs
one obtains the fact that $2n$ can be decomposed as the sum of 
four integer squares as
\begin{equation*}
\resum\paren{i^k\star A}^2 + \imsum\paren{i^k\star A}^2 + \resum\paren{i^k\star B}^2 + \imsum\paren{i^k\star B}^2 .
\end{equation*}
For $k>3$ this produces no new decompositions but in general
for $k=0$, $1$, $2$, and~$3$ this produces four distinct
decompositions of $2n$ into a sum of four squares.

With the help of a computer algebra system (CAS) one can
enumerate every possible way that $2n$ may be written as a sum of four integer squares.  For example,
when $n=23$ one has $0^2+1^2+3^2+6^2=2\cdot23$ and $1^2+2^2+4^2+5^2=2\cdot23$ as well as all
permutations of the squares and negations of the integers being squared.
During the first stage of our enumeration method only the first sequence of a complex Golay pair
is known, so at that stage we cannot compute its whole
sums-of-squares decomposition.
However, it is still possible to filter some sequences from consideration based on analyzing
the two known terms in the sums-of-squares decomposition.

For example, say that $A$ is the
first sequence in a potential complex Golay pair of length $23$ with $\resum(A)=0$ and $\imsum(A)=5$.
We can immediately discard $A$ from consideration because 
there is no way to chose the $\resum$ and $\imsum$ of $B$ to complete the sums-of-squares
decomposition of $2n$, i.e., there are no integer solutions $(x,y)$ of $0^2+5^2+x^2+y^2=2n$.

\section{Enumeration Method}\label{sec:method}

In this section we describe in detail the method we used to
perform a complete enumeration of all complex Golay pairs up to length~$\maxorder$.
Given a length $n$ our goal is to find all $\gset$ sequences $A$ and $B$
of length $n$ such that $(A,B)$ is a complex Golay pair.

\subsection{Preprocessing: Enumerate possibilities for $\boldsymbol\Aeven$ and $\boldsymbol\Aodd$}

The first step of our method uses Fiedler's trick of considering
the entries of $A$ of even index separately from the entries of $A$ of odd index.
There are approximately $n/2$ nonzero entries in each of $\Aeven$ and $\Aodd$
and there are four possible values for each nonzero entry.  Therefore there
are approximately $2\cdot4^{n/2}=2^{n+1}$ possible sequences to check in this
step.  Additionally, by Lemma~\ref{lem:normalize} we may assume the first nonzero
entry of both $\Aeven$ and $\Aodd$ is $1$ and that the second nonzero
entry of $\Aeven$ is not $-i$, decreasing the number of sequences
to check in this step by more than a factor of $4$.
It is quite feasible to perform a brute-force search
through all such sequences when $n\approx30$.

We apply Corollary~\ref{cor:fiedler} to every possibility for $\Aeven$ and $\Aodd$.
There are an infinite number of possible $z\in\C$ with $\abs{z}=1$, so
we do not attempt to apply Corollary~\ref{cor:fiedler} using all such $z$.
Instead we try a sufficiently large number of $z$ so that 
in the majority of cases for which a $z$ exists with $\abs{h_{A'}(z)}^2>2n$
(where $A'$ is either $\Aeven$ or $\Aodd$) we in fact find such a $z$.
In our implementation we chose to take $z$ to be $e^{2\pi ij/N}$
where $N\coloneqq2^{14}$ and $j=0$, $\dotsc$, $N-1$.

At the conclusion of this step we will have two lists: one list $\Leven$
of the $\Aeven$ which were not discarded and one list $\Lodd$ of the $\Aodd$
which were not discarded.

\subsection{Stage 1: Enumerate possibilities for $\boldsymbol A$}

We now enumerate all possibilities for $A$ by joining the
possibilities for $\Aeven$ with the possibilities for $\Aodd$.
For each $A_1\in\Lodd$ and $A_2\in\Leven$ we form the sequence $A$
by letting the $k$th entry of $A$ be either the $k$th entry of $A_1$
or $A_2$ (whichever is nonzero).  Thus the entries of $A$ are either
$\pm1$ or $\pm i$ and therefore $A$ is a valid candidate
for the first sequence of a complex Golay pair of length $n$.

At this stage we now use the filtering result of Corollary~\ref{cor:filter} and 
the sums-of-squares decomposition result of Lemma~\ref{lem:decomp} to perform
more extensive filtering on the sequences $A$ which we formed above.
In detail, our next filtering check proceeds as follows:
Let $R_k\coloneqq\resum(i^k\star A)$ and $I_k\coloneqq\imsum(i^k\star A)$.
By using a Diophantine equation solver we check if the Diophantine equations
\[  R_k^2 + I_k^2 + x^2 + y^2 = 2n  \]
are solvable in integers $(x,y)$ for $k=0$, $1$, $2$, $3$.
As explained in Section~\ref{subseq:decomp}, if any of these equations
have no solutions then $A$ cannot be a member of a complex Golay pair
and can be ignored.
Secondly, we use Corollary~\ref{cor:filter} with $z$ chosen to
be $e^{2\pi ij/N}$ for $j=0$, $\dotsc$, $N-1$ where $N\coloneqq 2^7$
(we use a smaller value of $N$ than in the preprocessing step
because in this case there are a larger number of sequences
which we need to apply the filtering condition on).

If $A$ passes both filtering conditions then we add it to a list $\LA$ and
try the next value of $A$ until no more possibilities remain.
At the conclusion of this stage we will have a list of sequences $\LA$
which could potentially be a member of a complex Golay pair.
By construction, the first member of all complex Golay pairs
(up to the equivalence described in Lemma~\ref{lem:normalize})
of length $n$ will be in $\LA$.

\subsection{Stage 2: Construct the second sequence $\boldsymbol B$ from $\boldsymbol A$}\label{sec:stage2}

In the second stage we take as input the list $\LA$ generated
in the first stage, i.e., a list of the sequences $A$ that were not filtered by
any of the filtering theorems we applied.  For each $A\in\LA$ we attempt to
construct a second sequence $B$ such that $(A,B)$ is a complex Golay pair.
We do this by generating a SAT instance which encodes the property of
$(A,B)$ being a complex Golay pair where
the entries of $A$ are known and the entries
of $B$ are unknown and encoded using Boolean variables.
Because there are four possible values for each entry of $B$
we use two Boolean variables to encode each entry. 
Although the exact encoding used is arbitrary, we fixed the following
encoding in our implementation, where the variables $v_{2k}$ and $v_{2k+1}$
represent $b_k$, the $k$th entry of $B$:
\[ \begin{array}{@{\;\;}c@{\;\;}c@{\;\;}|@{\;\;}c@{\;\;}}
v_{2k} & v_{2k+1} & b_k \\ \hline
\F & \F & 1 \\
\F & \T & -1 \\
\T & \F & i \\
\T & \T & -i
\end{array} \]

To encode the property that $(A,B)$ is a complex Golay pair in out SAT instance
we add the conditions which define $(A,B)$ to be a complex Golay pair, i.e.,
\begin{equation*}
N_A(s) + N_B(s) = 0 \qquad\text{for}\qquad s = 1, \dotsc, n-1 . 
\end{equation*}
These equations could be encoded using clauses in conjunctive normal form
(for example by constructing
logical circuits to perform complex multiplication and addition and then
converting those circuits into CNF clauses).  However, we found that a much
more efficient and convenient method was to use a \emph{programmatic} SAT solver.

The concept of a programmatic SAT solver was first introduced
in~\cite{ganesh2012lynx} where a programmatic SAT solver was
shown to be more efficient than a standard SAT solver when solving
instances derived from RNA folding problems.
More recently, a programmatic SAT solver was also shown to be
useful when searching for Williamson matrices~\cite{bright2017sat+}.
Generally, programmatic SAT solvers perform well when there is
domain-specific knowledge about the problem being
solved that cannot easily be encoded into SAT instances directly
but can be used to learn facts about potential solutions
which can help guide the solver in its search.

Concretely, a programmatic SAT solver is compiled with a piece
of code which encodes a property that a solution of the SAT instance
must satisfy.  Periodically the SAT solver will run this code
while performing its search and if the current partial
assignment violates a property that is expressed in the provided
code then a conflict clause is generated encoding this fact.
The conflict clause is added to the SAT solver's database
of learned clauses where it is used to increase the efficiency
of the remainder of the search.
The reason these clauses can be so useful is because they can
encode facts which the SAT solver would have no way of learning
otherwise, since the SAT solver has no knowledge of the domain of
the problem.

Not only does this paradigm allow the SAT solver to perform its
search more efficiently, it also allows instances
to be much more expressive.  Under this framework SAT instances do not
have to consist solely of Boolean formulas in conjunctive normal
form (the typical format of SAT instances) but can consist of
clauses in conjunctive normal form combined with a piece of
code that \emph{programmatically} expresses clauses.
This extra expressiveness is also a feature of SMT solvers,
though SMT solvers typically require more overhead to use.  Additionally,
one can compile \emph{instance-specific} programmatic SAT solvers
which are tailored to perform searches for a specific class of
problems.

For our purposes we use a programmatic
SAT solver tailored to search for
sequences $B$ that when paired with a given sequence~$A$
form a complex Golay pair.
Each instance will contain the $2n$ variables $v_0$, $\dotsc$, $v_{2n-1}$
that encode the entries of $B$ as previously specified.
In detail, the code given to the SAT solver does the following:
\begin{enumerate}
\item Compute and store the values $N_A(k)$ for $k=1$, $\dotsc$, $n-1$.
\item Initialize $s$ to $n-1$.  This will be a variable which
controls which autocorrelation condition we are currently examining.
\item Examine the current partial assignment to $v_0$, $v_1$,
$v_{2n-2}$, and $v_{2n-1}$.  If all these values have been assigned
then we can determine the values of $b_0$ and $b_{n-1}$.
From these values we compute
$N_B(s) = b_0\overline{b_{n-1}}$.
If $N_A(s)+N_B(s)\neq0$ then $(A,B)$ cannot be a complex Golay pair
(regardless of the values of $b_1$, $\dotsc$, $b_{n-2}$) and therefore
we learn a conflict clause which says that $b_0$ and $b_{n-1}$ cannot
both be assigned to their current values.  More explicitly,
if $v_k^{\text{cur}}$ represents the literal $v_k$ when $v_k$
is currently assigned to true and the literal $\lnot v_k$ when
$v_k$ is currently assigned to false
we learn the clause
\[ \lnot(v_0^{\text{cur}}\land v_1^{\text{cur}}\land v_{2n-2}^{\text{cur}}\land v_{2n-1}^{\text{cur}}) . \]
\item Decrement $s$ by $1$ and repeat the previous step,
computing $N_B(s)$ if the all the $b_k$ which appear
in its definition have known values.
If $N_A(s)+N_B(s)\neq0$ then learn a clause preventing the values
of $b_k$ which appear in the definition of $N_B(s)$ from being assigned
the way that they currently are.  Continue to repeat this step until $s=0$.
\item If all values of $B$ are assigned but no clauses
have been learned then output the complex Golay pair $(A,B)$.
If an exhaustive search is desired, learn a clause which prevents
the values of $B$ from being assigned the way they currently are;
otherwise learn nothing and return control to the SAT solver.
\end{enumerate}
For each $A$ in the list $\LA$ from stage~1 we run a SAT solver with the above
programmatic code; the list of all outputs $(A,B)$ in step~(5) shown
above now form a complete list of complex Golay pairs of length $n$
up to the equivalence given in Lemma~\ref{lem:normalize}.  In fact,
since Lemma~\ref{lem:normalize} says that we can set $b_0=1$ we can
assume that both $v_0$ and $v_1$ are always set to false.
In other words, we can add the two clauses $\lnot v_0$
and $\lnot v_1$ into our SAT instance without omitting any
complex Golay pairs up to equivalence.

\subsection{Postprocessing: Enumerating all complex Golay pairs}

At the conclusion of the second stage we have obtained a list of complex
Golay pairs of length $n$ such that every complex Golay pair of length $n$
is equivalent to some pair in our list.  However, because we have not
accounted for all the equivalences in Section~\ref{subsec:equiv} some pairs
in our list may be equivalent to each other.  In some sense such pairs
should not actually be considered distinct, so to count how many distinct
complex Golay pairs exist in length $n$ we would like
to find and remove pairs which are equivalent from the list.
Additionally, to verify the counts given in~\cite{gibson2011quaternary}
it is necessary to produce a list which contains
\emph{all} complex Golay pairs.  We now describe an algorithm which
does both, i.e., it produces a list of all complex Golay pairs as 
well as a list of all inequivalent complex Golay pairs.

In detail, our algorithm performs the following steps:
\begin{enumerate}
\item Initialize $\Omega_\text{all}$ to be the set of complex Golay pairs
generated in stage 2.  This variable will be a set that will be
populated with and eventually contain all complex Golay pairs of length~$n$.
\item Initialize $\Omega_{\text{inequiv}}$ to be the empty set.  This variable
will be a set that will be populated with and eventually contain all inequivalent
complex Golay pairs of length~$n$.
\item For each $(A,B)$ in $\Omega_{\text{all}}$:
\begin{enumerate}
\item If $(A,B)$ is already in $\Omega_{\text{inequiv}}$ then skip this $(A,B)$
and proceed to the next pair $(A,B)$ in $\Omega_{\text{all}}$.
\item Initialize $\Gamma$ to be the set containing $(A,B)$.
This variable will be a set that will be populated with and eventually
contain all complex Golay pairs equivalent to $(A,B)$.
\item For every $\gamma$ in $\Gamma$ add
$\operatorname{E1}(\gamma)$, $\dotsc$, $\operatorname{E5}(\gamma)$ to $\Gamma$.
Continue to do this until every pair in $\Gamma$ has been examined and no new
pairs are added to $\Gamma$.
\item Add $(A,B)$ to $\Omega_{\text{inequiv}}$ and add all pairs in $\Gamma$ to
$\Omega_{\text{all}}$.
\end{enumerate}
\end{enumerate}

After running this algorithm listing the members of $\Omega_{\text{all}}$
gives a list of all complex Golay pairs of length~$n$ and
listing the members of $\Omega_{\text{inequiv}}$
gives a list of all inequivalent complex Golay pairs of length~$n$.
At this point we can also construct the complete list of sequences
which appear in any complex Golay pair of length~$n$.
To do this it suffices to add $A$ and $B$ to a new
set $\Omega_{\text{seqs}}$ for each $(A,B)\in\Omega_{\text{all}}$.

\subsection{Optimizations}\label{sec:optimization}

Although the method described will correctly enumerate all complex Golay
pairs of a given length $n$, for the benefit of potential implementors
we mention a few optimizations which we found helpful.

In stage 1 we check if Diophantine equations of the form
\[ R^2 + I^2 + x^2 + y^2 = 2n \tag{$*$}\label{eq:dioeq} \]
are solvable in integers $(x,y)$ where $R$ and $I$ are given.
CAS functions like \texttt{PowersRepresentations} in \textsc{Mathematica}
or \texttt{nsoks} in \textsc{Maple}~\cite{nsoks}
can determine all ways of writing $2n$ as a sum of four integer squares.
From this information we construct a Boolean two dimensional array
$D$ such that $D_{\abs{R},\abs{I}}$ is true if and only if~\eqref{eq:dioeq}
has a solution, making the check for solvability a fast lookup.
In fact, one need only construct the lookup table
for $R$ and $I$ with $R+I\equiv n\pmod{2}$ as the following lemma shows.
\begin{lemma}
Suppose\/ $R$ and\/ $I$ are the $\resum$ and $\imsum$ of a sequence\/
$X\in\brace{\pm1,\pm i}^n$.  Then\/ $R+I\equiv n\pmod{2}$.
\end{lemma}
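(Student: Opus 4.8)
The plan is to track the parity of $R+I = \resum(X) + \imsum(X)$ by splitting the $n$ entries of $X$ according to which of the four values $\{1,-1,i,-i\}$ they take. Say $X$ has $p$ entries equal to $1$, $q$ entries equal to $-1$, $r$ entries equal to $i$, and $s$ entries equal to $-i$, so that $p+q+r+s = n$. Then $\resum(X) = p - q$ and $\imsum(X) = r - s$, hence $R + I = (p-q) + (r-s)$.

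First I would observe that $(p-q) \equiv (p+q) \pmod 2$ and $(r-s) \equiv (r+s) \pmod 2$, since subtracting $2q$ (resp.\ $2s$) does not change parity. Therefore
\[
R + I = (p-q) + (r-s) \equiv (p+q) + (r+s) = p+q+r+s = n \pmod 2,
\]
which is exactly the claimed congruence. This is the entire argument; it is elementary and there is no real obstacle — the only thing to be careful about is setting up the counting correctly, i.e.\ making sure every entry of $X$ contributes to exactly one of the four counts $p,q,r,s$, which holds because $X \in \brace{\pm1,\pm i}^n$ by hypothesis.

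If one prefers an even shorter route, I would instead note that each entry $x_k$ of $X$ contributes $\Re(x_k) + \Im(x_k)$ to $R+I$, and for every $x_k \in \brace{\pm1,\pm i}$ one has $\Re(x_k) + \Im(x_k) \in \{1,-1\}$ — it is $\pm1$ in all four cases. Thus $R+I$ is a sum of $n$ terms each equal to $\pm1$, so $R+I \equiv n \pmod 2$. Either presentation works; the counting version makes the bookkeeping fully explicit, while the $\pm1$-sum version is the slicker one-liner and is probably what I would put in the paper.
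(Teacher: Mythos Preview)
Your proof is correct and your first (counting) argument is essentially identical to the paper's own proof, which also partitions the entries by value and uses $-1\equiv 1\pmod 2$ to pass from $(p-q)+(r-s)$ to $p+q+r+s=n$. Your alternative one-liner via $\Re(x_k)+\Im(x_k)\in\{\pm1\}$ is a nice extra observation not in the paper.
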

\begin{proof}
Let $\#_c$ denote the number of entries in $X$ with value $c$.  Then
\[ R + I = (\#_1 - \#_{-1}) + (\#_{i} - \#_{-i}) \equiv \#_1 + \#_{-1} + \#_{i} + \#_{-i} \pmod{2} \]
since $-1\equiv1\pmod{2}$.
The quantity on the right is $n$ since there are $n$ entries in $X$.
\end{proof}

In stage 1 we check if $\abs{h_A(z)}^2>2n$ where $z=e^{2\pi ij/N}$ for
$j=0$, $\dotsc$, $N-1$ with $N=2^7$.  However, we found that it was more efficient
to not check the condition for each $j$ in ascending order (i.e., for each~$z$
in ascending complex argument)
but to first perform the check on points $z$ with larger spacing between them.
In our implementation we first assigned~$N$
to be~$2^3$ and performed the check for odd $j=1$, $3$, $\dotsc$, $N-1$.
Following this we doubled~$N$ and again performed the check for odd~$j$, proceeding in this
matter until all points~$z$ had been checked.  (This ignores checking the
condition when $z=i^k$ for some $k$ but that is desirable since
in those cases $\abs{h_A(i^k)}^2=\resum(i^k\star A)^2+\imsum(i^k\star A)^2$
and the sums-of-squares condition is a strictly stronger filtering method.)

In the preprocessing step and stage~1 it is necessary to evaluate
the Hall polynomial $h_{A'}$ or $h_A$ at roots of unity $z=e^{2\pi ij/N}$
and determine its squared absolute value.
The fastest way we found of doing this used the discrete Fourier transform.
For example, let $A'$ be the sequence $\Aeven$, $\Aodd$, or $A$
under consideration but padded with trailing zeros
so that $A'$ is of length~$N$.
By definition of the discrete Fourier transform we have that
\[ \DFT(A') = \brack*{h_{A'}\paren[\big]{e^{2\pi ij/N}}}_{j=0}^{N-1} . \]
Thus, we determine the values of $\abs{h_{A'}(z)}^2$ by
taking the squared absolute values of the entries of $\DFT(A')$.  If
$\abs{h_{A'}(z)}^2>2n$ for some~$z$ then by Corollary~\ref{cor:filter}
or Corollary~\ref{cor:fiedler}
we can discard $A'$ from consideration.  To guard against potential
inaccuracies introduced by the algorithms used to compute the DFT we
actually ensure that $\abs{h_{A'}(z)}^2>2n+\epsilon$ for some tolerance~$\epsilon$
which is small but larger than the accuracy that the DFT is computed to
(e.g., $\epsilon=10^{-3}$).

In the preprocessing step before setting $N\coloneqq2^{14}$ we first set $N\coloneqq n$
and perform the rest of the step as given.  The advantage of first
performing the check with a smaller value of $N$ is that the discrete
Fourier transform of $A'$ can be computed faster.  Although
the check with $N=n$ is a less effective filter, it often succeeds and whenever it
does it allows us to save time by not performing the more costly longer DFT.

In stage 1 our application of Corollary~\ref{cor:filter} requires
computing $\abs{h_A(z)}^2$ where $z=e^{2\pi ij/N}$ for $j=0$, $\dotsc$, $N-1$.
Noting that
\[ h_A(z) = h_{\Aeven}(z) + h_{\Aodd}(z) \]
one need only compute $h_{\Aeven}(z)$ and $h_{\Aodd}(z)$
for each 
each $\Aeven$ and $\Aodd$ generated in the preprocessing step
and once those are known $h_A(z)$ can be found by a simple addition.

In stage 2 one can also include properties that complex Golay sequences
must satisfy in the code compiled with the programmatic SAT solver.
As an example of this, we state the following proposition which was new to the
authors and does not appear to have been previously published.
\begin{proposition}\label{prop:prod}
Let\/ $(A,B)$ be a complex Golay pair.  Then
\[ a_k a_{n-k-1} b_k b_{n-k-1} = \pm 1 \qquad\text{for}\qquad\text{$k=0$, $\dotsc$, $n-1$}. \]
\end{proposition}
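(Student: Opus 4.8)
The plan is to extract the claimed product identity from the autocorrelation conditions by looking at the two extreme autocorrelation coefficients, $N_A(n-1)$ and $N_B(n-1)$, and then bootstrapping. First I would note that the $s=n-1$ condition of Definition~\ref{def:cgp} reads $a_0\overline{a_{n-1}} + b_0\overline{b_{n-1}} = 0$. Since every entry lies in $\gset$, each of $a_0\overline{a_{n-1}}$ and $b_0\overline{b_{n-1}}$ is itself a fourth root of unity, so the relation forces $b_0\overline{b_{n-1}} = -a_0\overline{a_{n-1}}$; multiplying the two sides appropriately gives $a_0\overline{a_{n-1}}\,b_0\overline{b_{n-1}} = -\abs{a_0\overline{a_{n-1}}}^2 = -1$. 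Because conjugation agrees with multiplicative inversion on $\gset$, and because $a_k a_{n-k-1}b_k b_{n-k-1}$ differs from $a_0\overline{a_{n-1}}b_0\overline{b_{n-1}}$ only by replacing some factors with their conjugates (which, for fourth roots of unity, can only flip a sign when the factor is $\pm i$), one sees that the case $k=0$ already gives a value in $\{\pm1\}$; the general statement is that this sign pattern is forced to be consistent.

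To handle general $k$ I would argue by a descent on the index. Suppose inductively that for all indices $j$ with $j<k$ the product $a_j a_{n-j-1}b_j b_{n-j-1}\in\{\pm1\}$, equivalently that $\{a_j,b_j\}$ and $\{a_{n-j-1},b_{n-j-1}\}$ each contain an even number of entries from $\{\pm i\}$ in a matched way. The key step is to combine the autocorrelation conditions $N_A(s)+N_B(s)=0$ for $s=n-k-1$: the leading terms of $N_A(s)$ and $N_B(s)$ are $a_0\overline{a_{n-k-1}}$ and $b_0\overline{b_{n-k-1}}$, and the remaining terms are controlled by already-handled indices. An alternative and cleaner route, which I suspect is what the authors intend, is to use the equivalence operation E5 (positional scaling by $i$): applying E5 to $(A,B)$ and invoking Lemma~\ref{lem:decomp} or Lemma~\ref{lem:hall} on the resulting pairs gives several constraints on the products of the entries, and comparing the decomposition types for the scalings $i^0,i^1,i^2,i^3$ pins down the parity of how many of the four entries $a_k,a_{n-k-1},b_k,b_{n-k-1}$ are non-real.

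The cleanest argument, and the one I would ultimately write up, is probably purely multiplicative: reduce everything modulo the subgroup $\{\pm1\}$ of $\gset$. Working in the quotient group $\gset/\{\pm1\}\cong\Z/2$, the claim $a_k a_{n-k-1}b_k b_{n-k-1}=\pm1$ is exactly the statement that the image of this product is trivial, i.e. that the number of the four entries equal to $\pm i$ is even. Since conjugation is trivial on this quotient, the relation $N_A(n-1)+N_B(n-1)=0$ together with the analogous relations after applying E5 and E4 give linear constraints over $\Z/2$ on the ``phase parities'' of the pairs $(a_0,b_0)$ and $(a_{n-1},b_{n-1})$, and symmetrically for each matched pair $(k,n-k-1)$. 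The main obstacle will be organizing the bookkeeping so that the contributions of the ``middle'' terms $a_j\overline{a_{j+s}}$ in $N_A(s)$ for $1\le j$ do not interfere — this is where I expect to need the inductive hypothesis on smaller index gaps, or else a slick observation that only the extreme terms matter modulo $\{\pm1\}$ because the others cancel in pairs between $N_A$ and $N_B$. Once that bookkeeping is set up, the identity should follow by a short parity count.
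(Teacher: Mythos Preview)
Your overall strategy matches the paper's: pass to parities and induct on $k$ using the autocorrelation condition at shift $s=n-1-k$. Writing $a_k=i^{c_k}$ and $b_k=i^{d_k}$ with $c_k,d_k\in\Z_4$, the claim is $c_k+c_{n-1-k}+d_k+d_{n-1-k}\equiv0\pmod2$, and your base case $k=0$ is fine. But two concrete ingredients are missing, and without them the sketch does not close.

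First, you never explain how the \emph{complex} equation $N_A(s)+N_B(s)=0$ produces a $\Z/2$ constraint. ``Reducing modulo $\{\pm1\}$'' is meaningful for products of elements of $\gset$ but not for sums; the sum lives in $\C$, not in $\gset$. The bridge the paper uses is the separate little lemma that if $\sum_j i^{e_j}=0$ in $\C$ then $\sum_j e_j\equiv0\pmod2$ (the $1$s must cancel the $-1$s and the $i$s the $-i$s, so $\#_1+2\#_2+3\#_3$ is even). Applied to $N_A(s)+N_B(s)=0$ with $s=n-1-k$ this gives $\sum_{j=0}^{k}(c_j+c_{j+s}+d_j+d_{j+s})\equiv0\pmod2$. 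Second, the ``middle terms'' you worry about do \emph{not} cancel between $N_A$ and $N_B$; that speculation is wrong. What actually happens is a trivial regrouping once you are working mod~$2$: the indices appearing in the sum are exactly $\{0,\dotsc,k\}\cup\{n-1-k,\dotsc,n-1\}=\bigcup_{j=0}^{k}\{j,\,n-1-j\}$, so the congruence above reads $\sum_{j=0}^{k}(c_j+c_{n-1-j}+d_j+d_{n-1-j})\equiv0\pmod2$. By the inductive hypothesis every summand with $j<k$ vanishes mod~$2$, hence so does the $j=k$ term. The detour through the equivalence operations E4 and E5 is unnecessary.
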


To prove this, we use the following simple lemma.
\begin{lemma}\label{lem:sumsofi}
Let\/ $c_k\in\Z_4$ for\/ $k=0$, $\dotsc$, $n-1$.  Then
\[ \sum_{k=0}^{n-1}i^{c_k}=0 \qquad\text{implies}\qquad \sum_{k=0}^{n-1} c_k\equiv0\pmod{2}. \]
\end{lemma}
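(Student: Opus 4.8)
\textbf{Proof proposal for Lemma~\ref{lem:sumsofi}.}
The plan is to reduce everything to a small counting argument over the four possible values of each $c_k$. For $j\in\{0,1,2,3\}$ let $n_j$ be the number of indices $k\in\{0,\dotsc,n-1\}$ with $c_k=j$, so that $n_0+n_1+n_2+n_3=n$. Using $i^0=1$, $i^1=i$, $i^2=-1$, $i^3=-i$, the hypothesis rewrites as
\[ 0=\sum_{k=0}^{n-1}i^{c_k}=(n_0-n_2)+(n_1-n_3)i , \]
and separating this complex equation into its real and imaginary parts immediately gives $n_0=n_2$ and $n_1=n_3$.

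Next I would compute the sum of the exponents modulo $2$ in terms of these multiplicities. Lifting each $c_k$ to the integer representative in $\{0,1,2,3\}$ (or working directly in $\Z_4$; the two readings agree modulo $2$) we get
\[ \sum_{k=0}^{n-1}c_k = 0\cdot n_0+1\cdot n_1+2\cdot n_2+3\cdot n_3 \equiv n_1+n_3 \pmod 2 , \]
where the coefficients $2$ and $3$ have been reduced to $0$ and $1$. Since $n_1=n_3$ by the first step, the right-hand side is $2n_1\equiv 0\pmod 2$, which is precisely the conclusion of the lemma.

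There is essentially no obstacle here: the entire argument is a two-line bookkeeping computation, and the only point worth a moment's care is making the modulo-$2$ reduction of $\sum_{k} c_k$ consistent, which it is. I would therefore present the lemma with exactly this short proof, immediately before deducing Proposition~\ref{prop:prod} from it.
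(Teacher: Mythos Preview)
Your proof is correct and essentially identical to the paper's: both count the multiplicities of each residue class, use the vanishing of the real and imaginary parts of $\sum i^{c_k}$ to obtain $n_0=n_2$ and $n_1=n_3$, and then reduce $\sum c_k$ modulo~$2$ to $n_1+n_3=2n_1\equiv 0$. The only difference is notational ($n_j$ versus the paper's $\#_j$).
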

\begin{proof}
Let $\#_c$ denote the number of $c_k$ with value $c$.
Note that the sum on the left implies that $\#_0=\#_2$ and $\#_1=\#_3$
because the $1$s must cancel with the $-1$s and the $i$s must cancel with the $-i$s.
Then
$\sum_{k=0}^{n-1}c_k=\#_1+2\#_2+3\#_3\equiv\#_1+\#_3\equiv2\#_1\equiv0\pmod{2}$.
\end{proof}
We now prove Proposition~\ref{prop:prod}.
\begin{proof}
Let $c_k$, $d_k\in\Z_4$ be such
that $a_k=i^{c_k}$ and $b_k=i^{d_k}$.  Using this notation
the multiplicative equation from Proposition~\ref{prop:prod} becomes
the additive congruence
\[ c_k + c_{n-k-1} + d_k + d_{n-k-1} \equiv 0 \pmod{2} . \label{eq:star}\tag{$*$} \]
Since $(A,B)$ is a complex Golay pair, the autocorrelation equations give us
\[ \sum_{k=0}^{n-s-1}\paren[\Big]{i^{c_k-c_{k+s}}+i^{d_k-d_{k+s}}} = 0 \]
for $s=1$, $\dotsc$, $n-1$.  Using Lemma~\ref{lem:sumsofi} and the fact that
$-1\equiv1\pmod{2}$ gives
\[ \sum_{k=0}^{n-s-1}\paren[\big]{c_k + c_{k+s} + d_k + d_{k+s}} \equiv 0 \pmod{2} \]
for $s=1$, $\dotsc$, $n-1$.  With $s=n-1$ one immediately derives~\eqref{eq:star}
for $k=0$.  With $s=n-2$ and~\eqref{eq:star} for $k=0$ one derives~\eqref{eq:star}
for $k=1$.  Working inductively in this manner one derives~\eqref{eq:star} for all $k$.
\end{proof}

In short, Proposition~\ref{prop:prod} tells us that an even number of
$a_k$, $a_{n-k-1}$, $b_k$, and $b_{n-k-1}$ are real for each $k=0$, $\dotsc$, $n-1$.
For example, if exactly one of $a_k$ and $a_{n-k-1}$ is real then exactly
one of $b_k$ and $b_{n-k-1}$ must also be real.
In this case, using our encoding from
Section~\ref{sec:stage2} we can add the clauses
\[ (v_{2k}\lor v_{2(n-k-1)})\land(\lnot v_{2k}\lor \lnot v_{2(n-k-1)}) \]
to our SAT instance.
These clauses say that exactly one of $v_{2k}$ and $v_{2(n-k-1)}$ is true.

\section{Results}

In order to provide a verification of the counts from~\cite{gibson2011quaternary}
we implemented the enumeration method described in Section~\ref{sec:method}.
The preprocessing step was performed by a C program
and used the mathematical library FFTW~\cite{frigo2005design} for computing
the values of $h_{A'}(z)$ as described in Section~\ref{sec:optimization}. 
Stage~1 was performed by a C++ program,
used FFTW for computing the values of $h_A(z)$
and a \textsc{Maple} script~\cite{nsoks}
for determining the solvability of the Diophantine equations 
given in Section~\ref{sec:stage2}.  Stage~2 was performed
by the programmatic SAT solver \textsc{MapleSAT}~\cite{liang2017empirical}.
The postprocessing step was performed by a Python script.

\begin{table}\begin{tabular}{cccc}
& \multicolumn{3}{c}{Total CPU Time in hours} \\
$n$ & Preproc. & Stage~1 & Stage~2 \\ \hline
17         & 0.00       & 0.01       & 0.06       \\ 
18         & 0.01       & 0.03       & 0.23       \\ 
19         & 0.01       & 0.07       & 0.18       \\ 
20         & 0.02       & 0.35       & 0.43       \\ 
21         & 0.04       & 1.93       & 1.89       \\ 
22         & 0.08       & 9.58       & 1.11       \\ 
23         & 0.15       & 42.01      & 3.02       \\ 
24         & 0.32       & 81.42      & 5.23       \\ 
25         & 0.57       & 681.31     & 20.51      
\end{tabular}
\caption{The time used to run the various stages of our algorithm in lengths $\boldsymbol{17\leq n\leq25}$.}
\label{tbl:timings}\end{table}

We ran our implementation on a cluster of machines running CentOS 7 and
using Intel Xeon E5-2683V4 processors running at $2.1$~GHz and
using at most 300MB of RAM.
To parallelize the work in each length~$n$ we split $\Lodd$ into 25 pieces
and used 25 cores to complete stages~1 and~2 of the algorithm.
Everything in the stages proceeded exactly as before
except that in stage~1 the list $\Lodd$
was 25 times shorter than it would otherwise be, which
allowed us to complete the first stages 20.7 times faster
and the second stages 23.9 times faster. 
The timings for the preprocessing step and the two stages of our algorithm are
given in Table~\ref{tbl:timings}; the timings for the postprocessing step
were negligible.  The times are given as the total amount of CPU time used
across all 25 cores.
Our code is available online as a part of the \textsc{MathCheck} project
and we have also made available
the resulting enumeration of complex Golay pairs~\cite{gcpweb}.

The sizes of the lists $\Leven$ and $\Lodd$ computed in the preprocessing step
and the size of the list $\LA$ computed in stage~1
are given in Table~\ref{tbl:listsizes} for all lengths in which we completed
a search.  Without applying any filtering $\LA$ would have size $4^n$ so
Table~\ref{tbl:listsizes} demonstrates the power of the criteria we used to
perform filtering; typically far over $99.99\%$ of possible sequences $A$ are
filtered from~$\LA$.
The generated SAT instances had $2n$ variables
(encoding the entries $b_0$, $\dotsc$, $b_{n-1}$),
$2$ unit clauses (encoding $b_0=1$),
$2\floor{n/2}$ binary clauses (encoding Proposition~\ref{prop:prod}), and
$n-1$ programmatic clauses (encoding Definition~\ref{def:cgp}).

Finally, we provide counts of the total number of complex Golay pairs of
length $n\leq\maxorder$ in Table~\ref{tbl:counts}.
The sizes of $\Omega_{\text{seqs}}$ and $\Omega_{\text{all}}$
match those from~\cite{gibson2011quaternary} in all cases
and the size of $\Omega_{\text{inequiv}}$ matches those from~\cite{CHK:DM:2002}
for $n\leq19$ (the largest length they exhaustively solved).

\begin{table}\begin{tabular}{cccc}
$n$ & $\abs{\Leven}$ & $\abs{\Lodd}$ & $\abs{\LA}$ \\ \hline
1          & 1          & $-$        & 1          \\ 
2          & 3          & 1          & 3          \\ 
3          & 3          & 1          & 1          \\ 
4          & 3          & 4          & 3          \\ 
5          & 12         & 4          & 5          \\ 
6          & 12         & 16         & 14         \\ 
7          & 39         & 16         & 12         \\ 
8          & 48         & 64         & 36         \\ 
9          & 153        & 64         & 44         \\ 
10         & 153        & 204        & 120        \\ 
11         & 561        & 252        & 101        \\ 
12         & 645        & 860        & 465        \\ 
13         & 2121       & 884        & 293        \\ 
14         & 2463       & 3284       & 317        \\ 
15         & 8340       & 3572       & 1793       \\ 
16         & 9087       & 12116      & 923        \\ 
17         & 31275      & 12824      & 3710       \\ 
18         & 34560      & 46080      & 14353      \\ 
19         & 117597     & 50944      & 10918      \\ 
20         & 130215     & 173620     & 26869      \\ 
21         & 446052     & 194004     & 116612     \\ 
22         & 500478     & 667304     & 67349      \\ 
23         & 1694865    & 732232     & 182989     \\ 
24         & 1886568    & 2515424    & 313878     \\ 
25         & 6447090    & 2727452    & 1211520
\end{tabular}
\caption{The number of sequences $\boldsymbol\Aeven$, $\boldsymbol\Aodd$,
and $\boldsymbol A$ that passed the filtering conditions of our algorithm
in lengths up to~$\boldsymbol\maxorder$.}
\label{tbl:listsizes}\end{table}

\begin{table}\begin{tabular}{cccc}
$n$ & $\abs{\Omega_{\text{seqs}}}$ & $\abs{\Omega_{\text{all}}}$ & $\abs{\Omega_{\text{inequiv}}}$ \\ \hline
1          & 4          & 16         & 1          \\ 
2          & 16         & 64         & 1          \\ 
3          & 16         & 128        & 1          \\ 
4          & 64         & 512        & 2          \\ 
5          & 64         & 512        & 1          \\ 
6          & 256        & 2048       & 3          \\ 
7          & 0          & 0          & 0          \\ 
8          & 768        & 6656       & 17         \\ 
9          & 0          & 0          & 0          \\ 
10         & 1536       & 12288      & 20         \\ 
11         & 64         & 512        & 1          \\ 
12         & 4608       & 36864      & 52         \\ 
13         & 64         & 512        & 1          \\ 
14         & 0          & 0          & 0          \\ 
15         & 0          & 0          & 0          \\ 
16         & 13312      & 106496     & 204        \\ 
17         & 0          & 0          & 0          \\ 
18         & 3072       & 24576      & 24         \\ 
19         & 0          & 0          & 0          \\ 
20         & 26880      & 215040     & 340        \\ 
21         & 0          & 0          & 0          \\ 
22         & 1024       & 8192       & 12         \\ 
23         & 0          & 0          & 0          \\ 
24         & 98304      & 786432     & 1056       \\ 
25         & 0          & 0          & 0          
\end{tabular}
\caption{The number complex Golay pairs in lengths up to~$\boldsymbol\maxorder$.
The table counts the number of individual sequences, the
number of pairs, and the number of pairs up to equivalence.}
\label{tbl:counts}\end{table}

Because~\cite{fiedler2013small,gibson2011quaternary,CHK:DM:2002} do not provide
implementations or timings for the enumerations they completed it is not possible
for us to compare the efficiency of our algorithm to previous algorithms.
However, we note that the results in this paper did not require
an exorbitant amount of computing resources.  If one has access to~25 modern CPU
cores then one can exhaustively enumerate all complex Golay pairs up to length~$25$
using our software in about a day and we estimate that increasing this
to length~$26$ would take another week. 
We note that Fiedler's paper~\cite{fiedler2013small} enumerates complex Golay pairs
to length~$28$.  It is not clear whether this was accomplished using more computing resources or
a more efficient algorithm, though we note that the preprocessing and stage~1
of our method is similar to Fiedler's method
with some differences in the filtering theorems.

\section{Future Work}

Besides increasing the length to which complex Golay pairs have been enumerated
there are a number of avenues for improvements which could be made in
future work.  As one example, we remark that
we have not exploited the algebraic structure
of complex Golay pairs revealed by Craigen, Holzmann, and Kharaghani~\cite{CHK:DM:2002}.
In particular, those authors prove a theorem which implies that if $p\equiv3\pmod{4}$
is a prime which divides $n$ and $A$ is a member of
a complex Golay pair of length $n$ then the polynomial $h_A$ is not irreducible
over $\FF_p(i)$.  Ensuring that this property holds
could be added to the filtering conditions which were used in stage~1.
In fact, the authors relate the factorization of $h_A$ over $\FF_p(i)$
to the factorization of $h_B$ over $\FF_p(i)$ for any complex Golay pair
$(A,B)$.  This factorization could potentially be used to perform stage~2
more efficiently, possibly supplementing or replacing the SAT solver entirely,
though it is unclear if such a method would perform better than our method in practice.
In any case, it would not be possible to apply their theorem in all lengths
(for example when $n$ is a power of $2$).

A second possible improvement could be to symbolically determine the value of
$z$ with $\abs{z}=1$ which maximizes $\abs{h_{A'}(z)}^2$
in the preprocessing step.  Once this value of $z$ is known
then $A'$ can be filtered if $\abs{h_{A'}(z)}^2>2n$ and if not then no
other value of $z$ needs to be tried.  This would save evaluating
$h_{A'}(z)$ at the points $z=e^{2\pi ij/N}$ for $j=0$, $\dotsc$, $N-1$
and would also increase the number of sequences which get filtered.
However, it is unclear if this method would be beneficial in practice due to
the overhead of maximizing $\abs{h_{A'}(z)}^2$ subject to $\abs{z}=1$.

Another possible improvement could be obtained by deriving further properties
like Proposition~\ref{prop:prod} that complex Golay pairs must satisfy.
We have performed some preliminary searches for such properties;
for example, consider the following property
which could be viewed as a strengthening of Proposition~\ref{prop:prod}:
\[ a_k\overline{a_{n-k-1}} = (-1)^{n+1}b_k\overline{b_{n-k-1}}
\qquad\text{for}\qquad\text{$k=1$, $\dotsc$, $n-2$}. \]
An examination of all complex Golay pairs up to length~$\maxorder$ reveals that
they all satisfy this property except for a \emph{single} complex Golay pair
up to equivalence. 
The only pair which doesn't satisfy this property is equivalent to 
\[ \paren[\big]{[1, 1, 1, -1, 1, 1, -1, 1], [1, i, i, -1, 1, -i, -i, -1]} \]
and was already singled out in~\cite{fiedler2008multi} for being special
as the only known example of what they call a
``cross-over'' Golay sequence pair.
Since a counterexample exists to this property there is no hope of proving
it in general, but perhaps a suitable generalization could be proven.

\section*{Acknowledgements}

This work was made possible by the facilities of the Shared Hierarchical 
Academic Research Computing Network (SHARCNET) and Compute/Calcul Canada.
The authors would also like to thank the anonymous reviewers whose comments
improved this article's clarity.

\balance
\bibliographystyle{ACM-Reference-Format}
\bibliography{issac-cgolay}


\begin{thebibliography}{31}


\ifx \showCODEN    \undefined \def \showCODEN     #1{\unskip}     \fi
\ifx \showDOI      \undefined \def \showDOI       #1{#1}\fi
\ifx \showISBNx    \undefined \def \showISBNx     #1{\unskip}     \fi
\ifx \showISBNxiii \undefined \def \showISBNxiii  #1{\unskip}     \fi
\ifx \showISSN     \undefined \def \showISSN      #1{\unskip}     \fi
\ifx \showLCCN     \undefined \def \showLCCN      #1{\unskip}     \fi
\ifx \shownote     \undefined \def \shownote      #1{#1}          \fi
\ifx \showarticletitle \undefined \def \showarticletitle #1{#1}   \fi
\ifx \showURL      \undefined \def \showURL       {\relax}        \fi
\providecommand\bibfield[2]{#2}
\providecommand\bibinfo[2]{#2}
\providecommand\natexlab[1]{#1}
\providecommand\showeprint[2][]{arXiv:#2}

\bibitem[\protect\citeauthoryear{\'Abrah\'am}{\'Abrah\'am}{2015}]%
        {abraham2015building}
\bibfield{author}{\bibinfo{person}{Erika \'Abrah\'am}.}
  \bibinfo{year}{2015}\natexlab{}.
\newblock \showarticletitle{{Building bridges between symbolic computation and
  satisfiability checking}}. In \bibinfo{booktitle}{\emph{Proceedings of the
  2015 ACM on International Symposium on Symbolic and Algebraic Computation}}.
  ACM, \bibinfo{address}{New York}, \bibinfo{pages}{1--6}.
\newblock


\bibitem[\protect\citeauthoryear{{\'A}brah{\'a}m, Abbott, Becker, Bigatti,
  Brain, Buchberger, Cimatti, Davenport, England, Fontaine, Forrest, Griggio,
  Kroening, Seiler, and Sturm}{{\'A}brah{\'a}m et~al\mbox{.}}{2016}]%
        {sc2}
\bibfield{author}{\bibinfo{person}{Erika {\'A}brah{\'a}m},
  \bibinfo{person}{John Abbott}, \bibinfo{person}{Bernd Becker},
  \bibinfo{person}{Anna~M. Bigatti}, \bibinfo{person}{Martin Brain},
  \bibinfo{person}{Bruno Buchberger}, \bibinfo{person}{Alessandro Cimatti},
  \bibinfo{person}{James~H. Davenport}, \bibinfo{person}{Matthew England},
  \bibinfo{person}{Pascal Fontaine}, \bibinfo{person}{Stephen Forrest},
  \bibinfo{person}{Alberto Griggio}, \bibinfo{person}{Daniel Kroening},
  \bibinfo{person}{Werner~M. Seiler}, {and} \bibinfo{person}{Thomas Sturm}.}
  \bibinfo{year}{2016}\natexlab{}.
\newblock \showarticletitle{{$\text{SC}^2$: Satisfiability Checking meets
  Symbolic Computation (Project Paper)}}. In
  \bibinfo{booktitle}{\emph{Intelligent Computer Mathematics: 9th International
  Conference, CICM 2016, Bialystok, Poland, July 25--29, 2016, Proceedings}}.
  \bibinfo{publisher}{Springer International Publishing},
  \bibinfo{address}{Cham}, \bibinfo{pages}{28--43}.
\newblock
\showISBNx{978-3-319-42547-4}
\newblock
\shownote{\url{http://www.sc-square.org/}.}


\bibitem[\protect\citeauthoryear{Bright}{Bright}{2017}]%
        {brightthesis}
\bibfield{author}{\bibinfo{person}{Curtis Bright}.}
  \bibinfo{year}{2017}\natexlab{}.
\newblock \emph{\bibinfo{title}{Computational Methods for Combinatorial and
  Number Theoretic Problems}}.
\newblock \bibinfo{thesistype}{Ph.D. Dissertation}. \bibinfo{school}{University
  of Waterloo}.
\newblock


\bibitem[\protect\citeauthoryear{Bright, Ganesh, Heinle, Kotsireas, Nejati, and
  Czarnecki}{Bright et~al\mbox{.}}{2016}]%
        {bright2016mathcheck}
\bibfield{author}{\bibinfo{person}{Curtis Bright}, \bibinfo{person}{Vijay
  Ganesh}, \bibinfo{person}{Albert Heinle}, \bibinfo{person}{Ilias~S.
  Kotsireas}, \bibinfo{person}{Saeed Nejati}, {and} \bibinfo{person}{Krzysztof
  Czarnecki}.} \bibinfo{year}{2016}\natexlab{}.
\newblock \showarticletitle{{\textsc{MathCheck2}: {A} {SAT+CAS} Verifier for
  Combinatorial Conjectures}}. In \bibinfo{booktitle}{\emph{Computer Algebra in
  Scientific Computing - 18th International Workshop, {CASC} 2016, Bucharest,
  Romania, September 19--23, 2016, Proceedings}}. \bibinfo{pages}{117--133}.
\newblock


\bibitem[\protect\citeauthoryear{Bright, Kotsireas, and Ganesh}{Bright
  et~al\mbox{.}}{2018a}]%
        {bright2017sat+}
\bibfield{author}{\bibinfo{person}{Curtis Bright}, \bibinfo{person}{Ilias
  Kotsireas}, {and} \bibinfo{person}{Vijay Ganesh}.}
  \bibinfo{year}{2018}\natexlab{a}.
\newblock \showarticletitle{A {SAT+CAS} Method for Enumerating {W}illiamson
  Matrices of Even Order}. In \bibinfo{booktitle}{\emph{Proceedings of the
  Thirty-Second {AAAI} Conference on Artificial Intelligence}}.
\newblock


\bibitem[\protect\citeauthoryear{Bright, Kotsireas, Heinle, and Ganesh}{Bright
  et~al\mbox{.}}{2018b}]%
        {gcpweb}
\bibfield{author}{\bibinfo{person}{Curtis Bright}, \bibinfo{person}{Ilias
  Kotsireas}, \bibinfo{person}{Albert Heinle}, {and} \bibinfo{person}{Vijay
  Ganesh}.} \bibinfo{year}{2018}\natexlab{b}.
\newblock \bibinfo{title}{Complex Golay Pairs via SAT}.
\newblock
  \bibinfo{howpublished}{\url{https://cs.uwaterloo.ca/~cbright/cgpsat/}}.
\newblock
\newblock
\shownote{Complex {G}olay pairs archived at
  \url{https://zenodo.org/record/1246337}, code available at
  \url{https://bitbucket.org/cbright/mathcheck2}.}


\bibitem[\protect\citeauthoryear{Craigen}{Craigen}{1994}]%
        {C:JCMCC:1994}
\bibfield{author}{\bibinfo{person}{R. Craigen}.}
  \bibinfo{year}{1994}\natexlab{}.
\newblock \showarticletitle{Complex {G}olay sequences}.
\newblock \bibinfo{journal}{\emph{J. Combin. Math. Combin. Comput.}}
  \bibinfo{volume}{15} (\bibinfo{year}{1994}), \bibinfo{pages}{161--169}.
\newblock
\showISSN{0835-3026}


\bibitem[\protect\citeauthoryear{Craigen, Holzmann, and Kharaghani}{Craigen
  et~al\mbox{.}}{2002}]%
        {CHK:DM:2002}
\bibfield{author}{\bibinfo{person}{R. Craigen}, \bibinfo{person}{W. Holzmann},
  {and} \bibinfo{person}{H. Kharaghani}.} \bibinfo{year}{2002}\natexlab{}.
\newblock \showarticletitle{Complex {G}olay sequences: structure and
  applications}.
\newblock \bibinfo{journal}{\emph{Discrete Math.}} \bibinfo{volume}{252},
  \bibinfo{number}{1-3} (\bibinfo{year}{2002}), \bibinfo{pages}{73--89}.
\newblock
\showCODEN{DSMHA4}
\showISSN{0012-365X}


\bibitem[\protect\citeauthoryear{Davis and Jedwab}{Davis and Jedwab}{1999}]%
        {davis1999peak}
\bibfield{author}{\bibinfo{person}{James~A Davis} {and}
  \bibinfo{person}{Jonathan Jedwab}.} \bibinfo{year}{1999}\natexlab{}.
\newblock \showarticletitle{Peak-to-mean power control in {OFDM}, {G}olay
  complementary sequences, and {R}eed--{M}uller codes}.
\newblock \bibinfo{journal}{\emph{IEEE Transactions on Information Theory}}
  \bibinfo{volume}{45}, \bibinfo{number}{7} (\bibinfo{year}{1999}),
  \bibinfo{pages}{2397--2417}.
\newblock


\bibitem[\protect\citeauthoryear{Donato, Urena, Mazo, and Alvarez}{Donato
  et~al\mbox{.}}{2004}]%
        {1336500}
\bibfield{author}{\bibinfo{person}{P.~G. Donato}, \bibinfo{person}{J. Urena},
  \bibinfo{person}{M. Mazo}, {and} \bibinfo{person}{F. Alvarez}.}
  \bibinfo{year}{2004}\natexlab{}.
\newblock \showarticletitle{Train wheel detection without electronic equipment
  near the rail line}. In \bibinfo{booktitle}{\emph{IEEE Intelligent Vehicles
  Symposium, 2004}}. \bibinfo{pages}{876--880}.
\newblock
\urldef\tempurl%
\url{https://doi.org/10.1109/IVS.2004.1336500}
\showDOI{\tempurl}


\bibitem[\protect\citeauthoryear{Fiedler}{Fiedler}{2013}]%
        {fiedler2013small}
\bibfield{author}{\bibinfo{person}{Frank Fiedler}.}
  \bibinfo{year}{2013}\natexlab{}.
\newblock \showarticletitle{Small {G}olay sequences.}
\newblock \bibinfo{journal}{\emph{Advances in Mathematics of Communications}}
  \bibinfo{volume}{7}, \bibinfo{number}{4} (\bibinfo{year}{2013}).
\newblock


\bibitem[\protect\citeauthoryear{Fiedler, Jedwab, and Parker}{Fiedler
  et~al\mbox{.}}{2008a}]%
        {fiedler2008framework}
\bibfield{author}{\bibinfo{person}{Frank Fiedler}, \bibinfo{person}{Jonathan
  Jedwab}, {and} \bibinfo{person}{Matthew~G Parker}.}
  \bibinfo{year}{2008}\natexlab{a}.
\newblock \showarticletitle{A Framework for the Construction of {G}olay
  Sequences}.
\newblock \bibinfo{journal}{\emph{IEEE Transactions on Information Theory}}
  \bibinfo{volume}{54}, \bibinfo{number}{7} (\bibinfo{year}{2008}),
  \bibinfo{pages}{3114--3129}.
\newblock


\bibitem[\protect\citeauthoryear{Fiedler, Jedwab, and Parker}{Fiedler
  et~al\mbox{.}}{2008b}]%
        {fiedler2008multi}
\bibfield{author}{\bibinfo{person}{Frank Fiedler}, \bibinfo{person}{Jonathan
  Jedwab}, {and} \bibinfo{person}{Matthew~G Parker}.}
  \bibinfo{year}{2008}\natexlab{b}.
\newblock \showarticletitle{A multi-dimensional approach to the construction
  and enumeration of Golay complementary sequences}.
\newblock \bibinfo{journal}{\emph{Journal of Combinatorial Theory, Series A}}
  \bibinfo{volume}{115}, \bibinfo{number}{5} (\bibinfo{year}{2008}),
  \bibinfo{pages}{753--776}.
\newblock


\bibitem[\protect\citeauthoryear{Frigo and Johnson}{Frigo and Johnson}{2005}]%
        {frigo2005design}
\bibfield{author}{\bibinfo{person}{Matteo Frigo} {and}
  \bibinfo{person}{Steven~G Johnson}.} \bibinfo{year}{2005}\natexlab{}.
\newblock \showarticletitle{The design and implementation of FFTW3}.
\newblock \bibinfo{journal}{\emph{Proc. IEEE}} \bibinfo{volume}{93},
  \bibinfo{number}{2} (\bibinfo{year}{2005}), \bibinfo{pages}{216--231}.
\newblock


\bibitem[\protect\citeauthoryear{Ganesh, O'Donnell, Soos, Devadas, Rinard, and
  Solar-Lezama}{Ganesh et~al\mbox{.}}{2012}]%
        {ganesh2012lynx}
\bibfield{author}{\bibinfo{person}{Vijay Ganesh}, \bibinfo{person}{Charles~W
  O'Donnell}, \bibinfo{person}{Mate Soos}, \bibinfo{person}{Srinivas Devadas},
  \bibinfo{person}{Martin~C Rinard}, {and} \bibinfo{person}{Armando
  Solar-Lezama}.} \bibinfo{year}{2012}\natexlab{}.
\newblock \showarticletitle{Lynx: A programmatic {SAT} solver for the
  {RNA}-folding problem}. In \bibinfo{booktitle}{\emph{International Conference
  on Theory and Applications of Satisfiability Testing}}. Springer,
  \bibinfo{pages}{143--156}.
\newblock


\bibitem[\protect\citeauthoryear{Gibson and Jedwab}{Gibson and Jedwab}{2011}]%
        {gibson2011quaternary}
\bibfield{author}{\bibinfo{person}{Richard~G Gibson} {and}
  \bibinfo{person}{Jonathan Jedwab}.} \bibinfo{year}{2011}\natexlab{}.
\newblock \showarticletitle{Quaternary {G}olay sequence pairs {I}: {E}ven
  length}.
\newblock \bibinfo{journal}{\emph{Designs, Codes and Cryptography}}
  \bibinfo{volume}{59}, \bibinfo{number}{1-3} (\bibinfo{year}{2011}),
  \bibinfo{pages}{131--146}.
\newblock


\bibitem[\protect\citeauthoryear{Golay}{Golay}{1961}]%
        {golay1961complementary}
\bibfield{author}{\bibinfo{person}{Marcel Golay}.}
  \bibinfo{year}{1961}\natexlab{}.
\newblock \showarticletitle{Complementary series}.
\newblock \bibinfo{journal}{\emph{IRE Transactions on Information Theory}}
  \bibinfo{volume}{7}, \bibinfo{number}{2} (\bibinfo{year}{1961}),
  \bibinfo{pages}{82--87}.
\newblock


\bibitem[\protect\citeauthoryear{Golay}{Golay}{1949}]%
        {golay1949multi}
\bibfield{author}{\bibinfo{person}{Marcel~J.E. Golay}.}
  \bibinfo{year}{1949}\natexlab{}.
\newblock \showarticletitle{Multi-slit spectrometry}.
\newblock \bibinfo{journal}{\emph{JOSA}} \bibinfo{volume}{39},
  \bibinfo{number}{6} (\bibinfo{year}{1949}), \bibinfo{pages}{437--444}.
\newblock


\bibitem[\protect\citeauthoryear{Holzmann and Kharaghani}{Holzmann and
  Kharaghani}{1994}]%
        {HK:AJC:1994}
\bibfield{author}{\bibinfo{person}{W.~H. Holzmann} {and} \bibinfo{person}{H.
  Kharaghani}.} \bibinfo{year}{1994}\natexlab{}.
\newblock \showarticletitle{A computer search for complex {G}olay sequences}.
\newblock \bibinfo{journal}{\emph{Australas. J. Combin.}}  \bibinfo{volume}{10}
  (\bibinfo{year}{1994}), \bibinfo{pages}{251--258}.
\newblock
\showISSN{1034-4942}


\bibitem[\protect\citeauthoryear{Hussain, Khan, Khalid, and Iqbal}{Hussain
  et~al\mbox{.}}{2014}]%
        {Hussain2014}
\bibfield{author}{\bibinfo{person}{Aamir Hussain}, \bibinfo{person}{Zeashan~H.
  Khan}, \bibinfo{person}{Azfar Khalid}, {and} \bibinfo{person}{Muhammad
  Iqbal}.} \bibinfo{year}{2014}\natexlab{}.
\newblock \bibinfo{booktitle}{\emph{A Comparison of Pulse Compression
  Techniques for Ranging Applications}}.
\newblock \bibinfo{publisher}{Springer Singapore},
  \bibinfo{address}{Singapore}, \bibinfo{pages}{169--191}.
\newblock
\showISBNx{978-981-4585-36-1}
\urldef\tempurl%
\url{https://doi.org/10.1007/978-981-4585-36-1_5}
\showDOI{\tempurl}


\bibitem[\protect\citeauthoryear{Kharaghani and Tayfeh-Rezaie}{Kharaghani and
  Tayfeh-Rezaie}{2005}]%
        {kharaghani2005hadamard}
\bibfield{author}{\bibinfo{person}{Hadi Kharaghani} {and}
  \bibinfo{person}{Behruz Tayfeh-Rezaie}.} \bibinfo{year}{2005}\natexlab{}.
\newblock \showarticletitle{A {H}adamard matrix of order 428}.
\newblock \bibinfo{journal}{\emph{Journal of Combinatorial Designs}}
  \bibinfo{volume}{13}, \bibinfo{number}{6} (\bibinfo{year}{2005}),
  \bibinfo{pages}{435--440}.
\newblock


\bibitem[\protect\citeauthoryear{Kotsireas}{Kotsireas}{2013}]%
        {kotsireas2013algorithms}
\bibfield{author}{\bibinfo{person}{Ilias~S Kotsireas}.}
  \bibinfo{year}{2013}\natexlab{}.
\newblock \showarticletitle{Algorithms and metaheuristics for combinatorial
  matrices}.
\newblock In \bibinfo{booktitle}{\emph{Handbook of Combinatorial
  Optimization}}. \bibinfo{publisher}{Springer}, \bibinfo{pages}{283--309}.
\newblock


\bibitem[\protect\citeauthoryear{Li and Chu}{Li and Chu}{2005}]%
        {li2005more}
\bibfield{author}{\bibinfo{person}{Ying Li} {and} \bibinfo{person}{Wen~Bin
  Chu}.} \bibinfo{year}{2005}\natexlab{}.
\newblock \showarticletitle{More Golay sequences}.
\newblock \bibinfo{journal}{\emph{IEEE Transactions on Information Theory}}
  \bibinfo{volume}{51}, \bibinfo{number}{3} (\bibinfo{year}{2005}),
  \bibinfo{pages}{1141--1145}.
\newblock


\bibitem[\protect\citeauthoryear{Liang, Poupart, Czarnecki, and Ganesh}{Liang
  et~al\mbox{.}}{2017}]%
        {liang2017empirical}
\bibfield{author}{\bibinfo{person}{Jia~Hui Liang}, \bibinfo{person}{Pascal
  Poupart}, \bibinfo{person}{Krzysztof Czarnecki}, {and} \bibinfo{person}{Vijay
  Ganesh}.} \bibinfo{year}{2017}\natexlab{}.
\newblock \showarticletitle{An Empirical Study of Branching Heuristics Through
  the Lens of Global Learning Rate}. In \bibinfo{booktitle}{\emph{International
  Conference on Theory and Applications of Satisfiability Testing}}. Springer,
  \bibinfo{pages}{119--135}.
\newblock


\bibitem[\protect\citeauthoryear{Lomayev, Gagiev, Maltsev, Kasher, Genossar,
  and Cordeiro}{Lomayev et~al\mbox{.}}{2017}]%
        {lomayev2017golay}
\bibfield{author}{\bibinfo{person}{A. Lomayev}, \bibinfo{person}{Y.P. Gagiev},
  \bibinfo{person}{A. Maltsev}, \bibinfo{person}{A. Kasher},
  \bibinfo{person}{M. Genossar}, {and} \bibinfo{person}{C. Cordeiro}.}
  \bibinfo{year}{2017}\natexlab{}.
\newblock \bibinfo{title}{Golay sequences for wireless networks}.
\newblock
\newblock
\urldef\tempurl%
\url{https://www.google.com/patents/US20170324461}
\showURL{%
\tempurl}
\newblock
\shownote{US Patent App. 15/280,635.}


\bibitem[\protect\citeauthoryear{Monagan, Geddes, Heal, Labahn, Vorkoetter,
  McCarron, and DeMarco}{Monagan et~al\mbox{.}}{2005}]%
        {Maple10}
\bibfield{author}{\bibinfo{person}{Michael~B. Monagan},
  \bibinfo{person}{Keith~O. Geddes}, \bibinfo{person}{K.~Michael Heal},
  \bibinfo{person}{George Labahn}, \bibinfo{person}{Stefan~M. Vorkoetter},
  \bibinfo{person}{James McCarron}, {and} \bibinfo{person}{Paul DeMarco}.}
  \bibinfo{year}{2005}\natexlab{}.
\newblock \bibinfo{booktitle}{\emph{Maple~10 Programming Guide}}.
\newblock \bibinfo{publisher}{Maplesoft}, \bibinfo{address}{Waterloo ON,
  Canada}.
\newblock


\bibitem[\protect\citeauthoryear{Nazarathy, Newton, Giffard, Moberly, Sischka,
  Trutna, and Foster}{Nazarathy et~al\mbox{.}}{1989}]%
        {nazarathy1989real}
\bibfield{author}{\bibinfo{person}{Moshe Nazarathy}, \bibinfo{person}{Steven~A
  Newton}, \bibinfo{person}{RP Giffard}, \bibinfo{person}{DS Moberly},
  \bibinfo{person}{F Sischka}, \bibinfo{person}{WR Trutna}, {and}
  \bibinfo{person}{S Foster}.} \bibinfo{year}{1989}\natexlab{}.
\newblock \showarticletitle{Real-time long range complementary correlation
  optical time domain reflectometer}.
\newblock \bibinfo{journal}{\emph{Journal of Lightwave Technology}}
  \bibinfo{volume}{7}, \bibinfo{number}{1} (\bibinfo{year}{1989}),
  \bibinfo{pages}{24--38}.
\newblock


\bibitem[\protect\citeauthoryear{Nowicki, Secomski, Litniewski, Trots, and
  Lewin}{Nowicki et~al\mbox{.}}{2003}]%
        {nowicki2003application}
\bibfield{author}{\bibinfo{person}{A Nowicki}, \bibinfo{person}{W Secomski},
  \bibinfo{person}{J Litniewski}, \bibinfo{person}{I Trots}, {and}
  \bibinfo{person}{PA Lewin}.} \bibinfo{year}{2003}\natexlab{}.
\newblock \showarticletitle{On the application of signal compression using
  {G}olay's codes sequences in ultrasound diagnostic}.
\newblock \bibinfo{journal}{\emph{Archives of Acoustics}} \bibinfo{volume}{28},
  \bibinfo{number}{4} (\bibinfo{year}{2003}).
\newblock


\bibitem[\protect\citeauthoryear{Paterson}{Paterson}{2000}]%
        {paterson2000generalized}
\bibfield{author}{\bibinfo{person}{Kenneth~G Paterson}.}
  \bibinfo{year}{2000}\natexlab{}.
\newblock \showarticletitle{Generalized {R}eed--{M}uller codes and power
  control in {OFDM} modulation}.
\newblock \bibinfo{journal}{\emph{IEEE Transactions on Information Theory}}
  \bibinfo{volume}{46}, \bibinfo{number}{1} (\bibinfo{year}{2000}),
  \bibinfo{pages}{104--120}.
\newblock


\bibitem[\protect\citeauthoryear{Riel}{Riel}{2006}]%
        {nsoks}
\bibfield{author}{\bibinfo{person}{Joe Riel}.} \bibinfo{year}{2006}\natexlab{}.
\newblock \bibinfo{title}{\texttt{nsoks}: A \textsc{Maple} script for writing
  $n$ as a sum of $k$ squares}.
\newblock \bibinfo{howpublished}{\url{http://www.swmath.org/software/21060}}.
\newblock


\bibitem[\protect\citeauthoryear{Zulkoski, Bright, Heinle, Kotsireas,
  Czarnecki, and Ganesh}{Zulkoski et~al\mbox{.}}{2017}]%
        {DBLP:journals/jar/ZulkoskiBHKCG17}
\bibfield{author}{\bibinfo{person}{Edward Zulkoski}, \bibinfo{person}{Curtis
  Bright}, \bibinfo{person}{Albert Heinle}, \bibinfo{person}{Ilias~S.
  Kotsireas}, \bibinfo{person}{Krzysztof Czarnecki}, {and}
  \bibinfo{person}{Vijay Ganesh}.} \bibinfo{year}{2017}\natexlab{}.
\newblock \showarticletitle{Combining {SAT} Solvers with Computer Algebra
  Systems to Verify Combinatorial Conjectures}.
\newblock \bibinfo{journal}{\emph{J. Autom. Reasoning}} \bibinfo{volume}{58},
  \bibinfo{number}{3} (\bibinfo{year}{2017}), \bibinfo{pages}{313--339}.
\newblock
\urldef\tempurl%
\url{https://doi.org/10.1007/s10817-016-9396-y}
\showDOI{\tempurl}


\end{thebibliography}

\end{document}